\newcommand{\ignore}[1]{}
\newtheorem{definition}{\hspace{0pt}\bf Definition}
\newtheorem{lemma}{\hspace{0pt}\bf Lemma}
\newtheorem{theorem}{\hspace{0pt}\bf Theorem}
\newcommand{\x}{\tilde x}
\newcommand{\s}{\tilde s}
\newcommand{\iu}{\tilde i_u}
\newcommand{\id}{\tilde i_d}
\newcommand{\ru}{\tilde r_u}
\title{Optimal adaptive testing for epidemic control: combining molecular and serology tests }
\author[1]{\quad  D. Acemoglu}
\author[2]{A. Fallah}
\author[3]{A. Giometto}
\author[2]{D. Huttenlocher}
\author[2]{A. Ozdaglar}
\author[4]{\qquad F. Parise\thanks{Corresponding author: F. Parise (fp264@cornell.edu), D. Acemoglu (daron@mit.edu),  A. Fallah (afallah@mit.edu), A. Giometto (giometto@cornell.edu), D. Huttenlocher (huttenlocher@mit.edu), A. Ozdaglar (asuman@mit.edu),  S. Pattathil (sarathp@mit.edu). We acknowledge support from the C3.AI grant.\\ Authors are listed in alphabetical order.
}}
\author[2]{S. Pattathil}
\affil[1]{\small Department of Economics, Massachusetts Institute of Technology (MIT)}
\affil[2]{Department of Electrical Engineering and Computer Science, MIT}
\affil[3]{School of Civil and Environmental Engineering, Cornell University}
\affil[4]{Department of Electrical and Computer Engineering, Cornell University}
\date{\vspace{-1.5cm}}
\begin{document}
\maketitle

\begin{abstract}
The COVID-19 crisis highlighted the importance of non-medical interventions, such as testing and isolation of infected individuals,  in the control of epidemics. Here, we show how to minimize testing needs while maintaining the number of infected individuals below a desired threshold. We find that the optimal policy is adaptive, with testing rates that depend on the epidemic state. Additionally, we show that such epidemic state  is difficult to infer with molecular tests alone, which are highly sensitive but have a short detectability window. Instead, we propose the use of baseline serology testing, which is less sensitive but detects past infections, for the purpose of state estimation. Validation of such combined testing approach with a stochastic model of epidemics shows significant cost savings compared to non-adaptive testing strategies that are the current standard for COVID-19.
\end{abstract}

\section{Introduction}

A large literature in mathematical epidemiology has studied how to control
and eradicate diseases by means of therapeutics and vaccinations, \cite%
{nowzari2016analysis,behncke2000optimal}. However, the influenza pandemic of
1918 and the  current COVID-19 pandemic underscore the difficulty of such
eradication in the case of virulent viruses,\ and have necessitated measures
to reduce transmissions, for example with the use of face masks \cite{chu2020physical}, social distancing and costly
lockdown measures \cite{flaxman2020estimating,bertuzzo2020geography} \cite{di2020impact}. Another powerful tool to limit
transmissions is early identification of infected individuals and  epidemic hot-spots in local communities, which can both be accomplished by
testing \cite{grassly2020report,oecd2020report}. Nevertheless, during the COVID-19 pandemic testing resources have proven to be limited and
expensive in much of the world \cite{AACC,NYT2,NYT3,pullanounderdetection}; in the US, lack of
testing capacity not only helped spread the virus but also led to the
underestimation of the severity of the pandemic in the first half of 2020, 
\cite{NYT}. This crucial role of testing notwithstanding, the question of how limited
testing resources can be deployed to optimally control the spread of a pandemic has
attracted relatively little systematic attention.

In this paper, we derive an optimal (dynamic) testing strategy in an SIR
(Susceptible, Infected, Recovered) model of epidemics. Because undetected
individuals may pass the disease to others and may be more likely to develop
serious symptoms requiring hospitalization, we start by assuming that the
number of undetected infected individuals has to be kept below a maximum $%
i_{\max }$ at all times. We show that the optimal testing strategy takes a
simple form: the testing rate has to be time-varying in order to satisfy
the constraint, and takes the form of a \emph{most
rapid approach path}, \cite{spence1975most}. Namely, there is no testing
until undetected infections reach $i_{\max }$, after which testing resources
are used to keep infections at the threshold $i_{\max }$ until infections
decline naturally, bringing the pandemic to an effective close. The
intuition for this result is that it is not worth using testing resources to
keep undetected infections strictly below $i_{\max }$ so long as the
pandemic is still ongoing and  infections cannot be brought down to
zero. Hence, the best approach is to let the infection reach the threshold
and then keep it there with a time-varying testing
policy. Note that such optimal time-varying strategy is state-dependent,
that is, the level of testing is dependent on the epidemic state (the current number of infected, susceptible and recovered individuals).

The second contribution of our paper starts by recognizing that the epidemic state needed to implement the 
aforementioned optimal testing strategy is typically hard to know precisely, as highlighted by the early stages of  COVID-19 spread  in the US.
In fact, the most common qPCR tests for COVID-19, which are {molecular tests}
based on  detection of the virus' genetic material via quantitative polymerase chain reaction, may be ill-suited to obtain
such aggregate information. These tests identify infected individuals only
during a short window of time. For example, according to \cite%
{kucirka2020variation} the probability of COVID-19 detection via qPCR is
above 75\% in a window of roughly a week within active cases, while \cite%
{roche} gives a three weeks window for detectability of cases via qPCR.
  In contrast, {serology tests}, which detect antibodies produced by the immune system in response to current and past infections, identify infections during a longer window \cite{kubina2020molecular}, but are typically less sensitive and thus received relatively less attention in the epidemic control  literature.
These problems are unlikely to be confined to COVID-19 and would probably
recur in the event of future pandemics. 
To systematize these observations we study the effectiveness of both types of tests for identifying the epidemic state and find that \textit{serology tests}, which have lower sensitivity but are  faster, cheaper and can reveal past infections,  offer a better alternative than 
qPCR-like tests  (from here on termed \textit{molecular tests}), which have high accuracy but a short-window of detection. Namely, we show that if the transmission rate  in the SIR
model is time-varying, then the epidemic state cannot be identified---is
not observable---with  molecular tests alone. Intuitively, just observing the flow
of current infections may not be sufficient to distinguish the nonlinear
dynamic evolution of the system due to different initial conditions versus
different time-varying trajectories of transmission rates. Serology testing overturns
this result, however, by providing a cheap way of estimating the stocks of past infected and recovered individuals. 
This is despite the fact that serology tests may
have significant Type II errors because of  low sensitivity (especially
in the first stage - 0 to 6 days - of infection, \cite{roche}). Indeed, we find that
Type II errors, which can be very costly when the purpose is  to diagnose 
individual infections, are not problematic for the purpose of 
estimating the epidemic state (which is  aggregate information). For such purpose,  detection of recovered individuals is a more important feature than high sensitivity. Hence,  serology
tests are an ideal complement to molecular tests for the purpose of estimating
aggregate infections---an intuition that is formally established by our
mathematical analysis.

In addition to our main analysis, we consider two important extensions.
First, we study a variant of our baseline model of optimal testing in which
both undetected and detected infections have to be kept below a maximum
threshold. In this case, the optimal testing strategy is more complex. Nevertheless, we establish that the basic insights from
our baseline analysis generalize to this problem. Second, we recognize that
in reality the dynamics of epidemics are intrinsically stochastic. To
confirm the robustness of the proposed approach, we apply our testing methodology to
a stochastic continuous-time Markov chain model of epidemics and develop an extended Kalman filter \cite{khalilbook} to
estimate the epidemic state in the presence of stochasticity.  To this end, we exploit an expansion  of the master equation governing the probability distribution of the Markov chain model \cite{vankampen,gardiner2009stochastic} to derive  a description of the epidemic dynamics in terms of a Langevin equation, whose mean coincides with the deterministic SIR model used for computing the optimal testing strategy. The covariance matrix of the noise term in the  Langevin   equation (which can be explicitly characterized as a function of the epidemic state)  is then given as input to the extended Kalman filter  to optimally incorporate new observations in the model predictions.

Our paper is related to the growing literature on SIR models, especially
applied to the recent COVID-19 pandemic. Classic references on the SIR model
and its applications to model epidemics include \cite%
{kermack1927contribution, daley2001epidemic, diekmann2001mathematical,
keeling2011modeling, andersson2012stochastic}. Additionally, see \cite%
{pastor2015epidemic} for a review of models of epidemic processes over
networks and \cite{anderson1992infectious, nowzari2016analysis} for analysis
and control of epidemic models. Several papers developed more general
compartmental models for analyzing the spread of COVID-19 and examining the
effects of interventions (see e.g., \cite{gatto2020spread, atkeson2020will,
stock2020data, cornell,zhang2020changes,chinazzi2020effect}). Other papers, such as \cite%
{pare2017epidemic, hota2019game, 8488684, hota2020generalized} analyzed the
spread of epidemics over both time-varying and static networks.

Our work is more closely connected to a smaller literature that considers
testing within this framework. \cite{alvarez2020simple, acemoglu2020optimal,brotherhood2020economic} consider testing in the context of
optimal lockdown policies in SIR models (in the latter two papers with
explicit recognition of heterogeneities across different age groups).
Neither of these papers studies optimal testing, nor discusses the problem
of identifying the underlying epidemic state, which is assumed to be known
in  this branch of the literature. \cite{acemoglu2020testing}
considers optimal testing in a simple model of disease percolation, but
their focus is on the countervailing effects that testing creates by discouraging social distancing among certain groups of individuals. Moreover, their analysis is
simplified by focusing on a non-SIR percolation model that enables explicit
characterization and they do not discuss the issue of estimating the
underlying epidemic state. \cite{drakopoulos2020demand} studies settings
where accurate tests are not available in abundance. They show that
moderately good tests provide enough information to have a positive social
outcome, and that it is not optimal to wait for tests with very high
accuracy. Similarly, \cite{larremore2020test} compared molecular
and antigen tests and found that test sensitivity is less important than testing
frequency for screening purposes. \cite{kraay2020modeling} suggests the use
of serology testing to allow seropositive individuals (i.e., individuals
with immunity) to increase their level of social interaction. They conclude
through extensive simulations that serology testing has the potential to
mitigate the impacts of the COVID-19 pandemic while also allowing a
substantial number of individuals to safely return to social interactions
and to the workplace. Information about the epidemic state of individuals obtained through both qPCR and serology tests is used in \cite{li2020disease} to derive disease-dependent lockdown policies.  \cite{vespignani2020modelling} highlights the need for integrating seroepidemiological data into transmission models to  reduce the uncertainty in the parameter estimates of clinical severity and transmission dynamics. \cite {behncke2000optimal} studies the optimal testing policy when the objective is a weighted combination of the cost of infection and testing with no constraints on the state variables. They show that the optimal testing policy takes the maximal value until some time and then zero after. 
The key novelty of our work is to suggest the combined use of
serology testing for the purpose of state estimation with molecular testing
for optimal containment of the number of infected within a desired
thresholds, together with an analytic derivation of the optimal
adaptive testing rates.

In characterizing optimal controls in an SIR framework, our paper is related
to a few other papers that study optimal lockdown policies in SIR models.
These include \cite{miclo2020optimal}, which provides an analytical
characterization of optimal lockdown policies in a setting where suppression
is costly and there is an upper bound on the number of infections,
representing a constraint on intensive care unit resources. Another related paper in this
regard is \cite{kruse2020optimal} which studies optimal social distancing
measures to minimize a combination of the total health and economic cost of
the infected population and the cost of reducing the transmission rate. The key difference of our
work is that we focus on testing as a means to identify and isolate infected
individuals instead of lockdown policies that impose a degree of isolation
to an entire community.

Finally, because of its analytical focus, our paper is distinguished from a
large number of recent papers that analyze intervention policies numerically
(see e.g., \cite{zaman2008stability, sharma2015stability, di2017optimal,
farboodi2020internal},  \cite{ gollier2020group}, \cite{berger2020seir} among others).
As detailed in the  discussion section, we believe that the
theoretical insights generated from our analysis of the SIR model highlight
fundamental mechanisms and properties related to the use of testing as a
tool for controlling an epidemic, that can then be generalized and refined to
more sophisticated models of specific epidemics as typically done in the
numerical literature above.

\section{Results}

\subsection{Optimal adaptive testing strategies}

\label{sec:opt}

We model the progression of the epidemic in a population via a
Susceptible-Infected-Recovered (SIR) model with three compartments corresponding to susceptible, infected and recovered individuals. Testing is introduced in the
model by partitioning the infected compartment into infected individuals
that have not yet been detected and are free to circulate, which we term
infected-undetected  (I$_\textup{u}$), and infected individuals that have been
detected and are therefore separated from the general population (e.g.,
quarantined), which we term infected-detected (I$_\textup{d}$) (see Fig. \ref{fig:sketch_SIR}A). 
Infected-undetected
individuals infect susceptible ones with possibly time-varying transmission rate $\beta(t)$, denoting the number of
contacts per unit time multiplied by the probability that a contact leads to
infection. For the purpose of controlling the epidemics, we assume that molecular testing (such as qPCR) is performed at rate $\theta(t)$ and has
sensitivity $\eta$, i.e., infected individuals who get tested are
detected with probability $\eta$. Individuals that test positive for the
infection are quarantined and moved to the infected-detected compartment. Infected individuals become recovered  with rate $\gamma$. The corresponding model
equations are: 
\begin{equation}  \label{eq:SIR_optimal}
\begin{aligned}
\frac{ds(t)}{dt} &= -\beta(t) s(t) i_u(t) \\
\frac{di_u(t)}{dt} &= \beta(t) s(t) i_u(t) - \gamma i_u(t) - \eta
\theta(t) i_u(t) \\
\frac{di_d(t)}{dt} &= \eta \theta(t) i_u(t) -
\gamma i_d(t)\\
\frac{dr(t)}{dt} &= \gamma i_u(t) + \gamma i_d(t),
\end{aligned}
\end{equation}
where small letters denote the fraction of individuals in each compartment.
We assume that the population size is constant with time, thus the last
equation is redundant since $r=1-s-i_u-i_d$.

\begin{figure}[!h]
\centering
\includegraphics{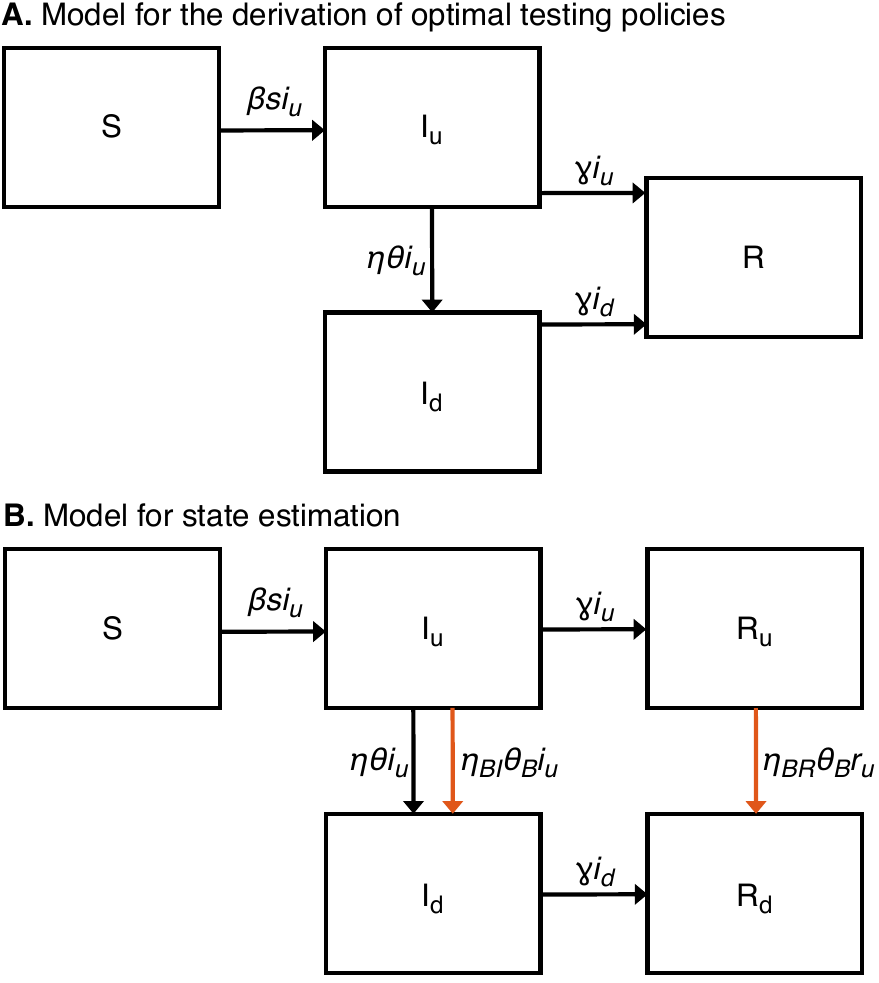}
\caption{The deterministic epidemiological models used for the derivation of
optimal testing policies (A) as discussed in Section \ref{sec:opt} and for state estimation (B) as discussed in Section \ref{sec:state_est}. Individuals in
the population are divided in compartments according to their
epidemiological state. In (A), Susceptible individuals (S) are infected via
contact with infected-undetected individuals ($\textup{I}_\textup{u}$), which either
transition to the infected-detected ($\textup{I}_\textup{d}$) compartment if they are tested
positive for infection, or to the Recovered (R) compartment if they
recover from the infection before being tested positive. Infected-detected
individuals also recover from the infection. In (B), infected-undetected
individuals ($\textup{I}_\textup{u}$) can transition to the infected-detected state by being
tested positive via either adaptive molecular testing or baseline testing.
Infected-undetected individuals can also transition to the recovered-undetected ($\textup{R}_\textup{u}$) compartment, if they recover from the infection before
being tested positive or displaying symptoms. With serology baseline
testing, recovered-undetected individuals can transition to the recovered-detected compartment ($\textup{R}_\textup{d}$) if they are tested positive for past infection.
Infected-detected individuals transition directly to the recovered-detected
compartment when they recover from the infection. Transitions between
compartments are indicated along with the corresponding rates, orange arrows
indicate transitions due to serology testing.}
\label{fig:sketch_SIR}
\end{figure}

Subject to these dynamics, we aim at solving the
following constrained optimization problem:
\begin{equation}
\begin{aligned} \min_{\theta(\cdot)\ge 0} &\quad
\int_{t=0}^{\infty}\theta(t) dt \\ \text{such that:}& \quad i_u(t) \leq
i_{\max} \quad \forall t . \end{aligned}  \label{eq_reduced_problem}
\end{equation}%
In words, our goal is to design the optimal adaptive testing rate to minimize
the total number of tests needed while controlling the epidemic so that the
fraction of infected-undetected individuals always remains below a desired threshold $%
i_{\max }\ll 1$. This constraint is motivated by two considerations. First,
infected-undetected individuals circulate freely in the society and infect
others, and thus a high number of such individuals would lead to a rapid
takeoff in infections. Second, because they do not receive care,
undetected infected individuals may later develop severe complications, and may need
emergency intensive care unit (ICU) capacity, which has proven to be in short supply during the COVID-19 pandemic. The
appropriate level of $i_{\max }$ is a policy choice, and depends on
several factors, including whether policymakers are intending to keep the
reproduction rate of the pandemic below one and the maximum surge capacity
of ICU resources.

Our first main result provides a complete characterization of the optimal adaptive policy for problem \eqref{eq_reduced_problem}.
For simplicity we here discuss the case when the transmission rate $\beta$ is constant (i.e., $\beta(t)=\beta$ for all times). In this case, we say that the system reaches \textit{herd immunity} when the
epidemic state is such that $s(t)={\gamma}/{\beta}$, as from that time on
the number of infected-undetected individuals decreases even without testing. We prove an analogous theorem with time-varying monotonic $%
\beta (t)$ in the Supplementary Materials.

\begin{theorem}
\label{optimal} The optimal testing policy $\theta^{\dagger}(t)$ for problem \eqref{eq_reduced_problem} with dynamics as in  Eq.\ \eqref{eq:SIR_optimal} and constant transmission
rate $\beta$ is described in three phases:

\begin{enumerate}
\item While $i_u(t)<i_{\max}$, do not test, i.e., set $\theta^\dagger(t)=0$.

\item After $i_u(t)$ reaches $i_{\max}$, test with time-varying rate $%
\theta^\dagger(t)=({\beta s(t)-\gamma})/{\eta}$.

\item Once herd immunity is reached, stop testing, i.e., set $%
\theta^\dagger(t)=0$.
\end{enumerate}
\end{theorem}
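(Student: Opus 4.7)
The plan is to identify a quantity that the SIR dynamics conserve when $\theta\equiv 0$ and that decreases monotonically when testing is applied, so that the cost $\int_0^\infty\theta\,dt$ can be tied directly to its total dissipation. Define
\[
F(s,i_u) := s + i_u - (\gamma/\beta)\ln s, \qquad G(s) := s - (\gamma/\beta)\ln s,
\]
so that $F(s,i_u)=G(s)+i_u$. A direct computation from \eqref{eq:SIR_optimal} gives
\[
\dot F \;=\; \bigl(1-\gamma/(\beta s)\bigr)\dot s + \dot i_u \;=\; -\eta\,\theta\,i_u,
\]
which, upon integration along any admissible trajectory, yields the key identity $\eta\int_0^\infty\theta(t)\,i_u(t)\,dt = F_0 - F_\infty$.

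The next step is to derive a universal lower bound on $\int_0^\infty\theta\,dt$. Using $i_u\le i_{\max}$ in this identity gives $\int_0^\infty\theta\,dt \ge (F_0-F_\infty)/(\eta\,i_{\max})$, so the problem reduces to controlling $F_\infty$ from above. I would argue that any admissible policy with finite testing cost must satisfy $s_\infty\le\gamma/\beta$: otherwise $\beta s-\gamma$ stays bounded away from zero, and the integrated $i_u$-equation $\eta\int\theta\,dt = \int(\beta s-\gamma)\,dt + \ln(i_u(0)/i_u(\infty))$ combined with $i_u\le i_{\max}$ forces $\int\theta\,dt=\infty$. Hence $s(t)$ must cross $\gamma/\beta$ at some finite time $T$, at which $F(T)=G(\gamma/\beta)+i_u(T)\le G(\gamma/\beta)+i_{\max}$; monotonicity $\dot F\le 0$ propagates this to $F_\infty\le G(\gamma/\beta)+i_{\max}$. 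Combining produces the universal lower bound
\[
\int_0^\infty\theta\,dt \;\ge\; \frac{F_0 - G(\gamma/\beta) - i_{\max}}{\eta\,i_{\max}} \;=:\; V^\star.
\]

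It remains to verify that $\theta^\dagger$ is admissible and attains $V^\star$ with equality. Feasibility is direct: in Phase 2, $\theta^\dagger=(\beta s-\gamma)/\eta$ is precisely the rate that makes $\dot i_u=0$, so $i_u$ stays pinned at $i_{\max}$; in Phase 3, $\beta s\le\gamma$ already forces $\dot i_u\le 0$ without testing; and Phase 1 is the undriven SIR evolution starting from $i_u(0)<i_{\max}$. Both inequalities in the lower bound become equalities under $\theta^\dagger$: since $\theta^\dagger>0$ only in Phase 2, where $i_u\equiv i_{\max}$, the estimate $\int\theta^\dagger i_u\,dt\le i_{\max}\int\theta^\dagger\,dt$ is tight; and $F$-conservation in Phase 3 (where $\theta^\dagger=0$) together with $i_u(\infty)=0$ gives $F_\infty=G(\gamma/\beta)+i_{\max}$ exactly. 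Therefore $\theta^\dagger$ attains $V^\star$, proving optimality.

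The step I expect to be the main obstacle is the asymptotic analysis used to obtain $F_\infty\le G(\gamma/\beta)+i_{\max}$: ruling out degenerate admissible policies that aggressively drive $i_u\to 0$ while keeping $s$ well above $\gamma/\beta$, and handling the $t\to\infty$ limit rigorously for policies that never saturate the constraint. The monotonic time-varying $\beta(t)$ extension mentioned to appear in the Supplementary Materials should follow the same Lyapunov-style argument with a time-dependent analogue of $F$.
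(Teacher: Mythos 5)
Your argument is correct, and it takes a genuinely different route from the paper's. The paper's proof (Supplementary Materials) first shows that the optimal policy reaches herd immunity in finite time with $i_u(t_{herd})=i_{\max}$ and zero testing afterwards, then integrates $\eta\theta=\beta s-\gamma-\frac{d}{dt}\ln i_u$ to rewrite the cost as $\int_0^{t_{herd}}(\beta s(t)-\gamma)\,dt$ plus a constant, and finally argues by trajectory comparison: the candidate's $i_u^\dagger$ pointwise dominates every feasible $i_u$, hence $s^\dagger$ is pointwise smallest and herd immunity is reached earliest, so both the integrand and the integration horizon are minimized (the ``most rapid approach path''). You choose the complementary integrating factor --- multiplying the $i_u$-equation by $i_u$ rather than dividing by it --- which turns the cost identity into the exact dissipation law $\dot F=-\eta\theta i_u$ for the classical SIR invariant $F=s+i_u-(\gamma/\beta)\ln s$, yielding a closed-form universal lower bound $V^\star=(F_0-G(\gamma/\beta)-i_{\max})/(\eta i_{\max})$ that the candidate attains because it saturates the only two sources of slack (testing occurs only where $i_u=i_{\max}$, and $F$ is conserved after herd immunity with $F(t_{herd})=G(\gamma/\beta)+i_{\max}$). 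Your approach buys an explicit optimal value and replaces the paper's somewhat informal perturbation step in its Lemma 2 (``decrease the control by a small amount'') with a clean verification; your treatment of the tail is also sound, since finite cost forces $s_\infty\le\gamma/\beta$ via $\eta\int_0^T\theta\,dt\ge(\beta s_\infty-\gamma)T-\ln(i_{\max}/i_u(0))$ and monotonicity of $F$ handles the limit. What the paper's comparison argument buys is the extension to monotone time-varying $\beta(t)$ (their supplementary theorem): there your invariant is no longer monotone along trajectories, since $\frac{d}{dt}\bigl[s+i_u-(\gamma/\beta(t))\ln s\bigr]=(\gamma\dot\beta/\beta^2)\ln s-\eta\theta i_u$ acquires a nonnegative extra term (as $\dot\beta\le0$ and $\ln s<0$) that depends on the whole trajectory, so the ``time-dependent analogue of $F$'' you defer to at the end is not as routine as you suggest. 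For the constant-$\beta$ statement actually asked, your proof is complete.
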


\begin{figure}[h]
\centering
\includegraphics{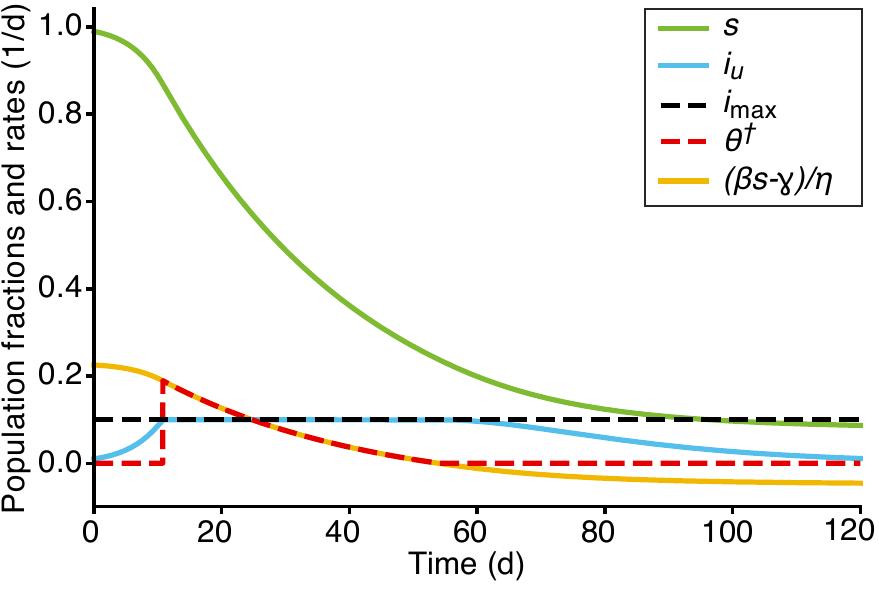}
\caption{Illustration of  the optimal testing policy for problem
\eqref{eq_reduced_problem}. The green and light-blue curves are, respectively, the fractions of
susceptible and infected-undetected individuals in the population. The
fraction of infected-undetected $i_u$ is kept below the constraint $i_{\max}$
(black, dashed line) at all times by the optimal adaptive testing
policy (red, dashed curve), which is equal to zero until $i_u$ reaches $%
i_{\max}$, is then equal to $(\protect\beta s(t)-\protect\gamma)/\protect%
\eta $ (yellow curve) until herd immunity is reached (i.e., when $s(t)=\protect\gamma/%
\protect\beta$), and is equal to zero afterwards.  For illustration purposes, we set $i_{\max}=0.1$ and $\eta=1$, the other parameters are as in Table \ref{table:parameters}.}
\label{fig:Optimal_SIR}
\end{figure}

As illustrated in Fig. \ref{fig:Optimal_SIR}, the optimal policy for problem  \eqref{eq_reduced_problem} starts testing only
once the constraint on $i_{u}$ is attained, and then sets a time-varying
rate $({\beta s(t)-\gamma})/{\eta}$ such that $d{i}_{u}(t)/dt=0$, keeping the fraction of
infected-undetected individuals constant at the threshold $i_{\max }$ until herd
immunity is reached, after which there is no need for further testing as the
epidemic tends naturally towards extinction.  Intuitively, given that no testing is needed once herd immunity is reached, the optimal policy takes the form of the most rapid approach path introduced in \cite{spence1975most} to reach herd immunity as fast as possible, while satisfying the $i_{\max}$ constraint.  The testing policy detailed above leads to the highest  number of infected-undetected individuals by employing no testing until undetected infections reach $i_{\max}$ and then utilizing testing resources  to keep the infections at this threshold, thus guaranteeing the most rapid feasible path to  herd immunity.

\subsection{State Estimation}\label{sec:state_est}

The optimal adaptive testing policy $\theta ^{\dagger }(t)$ derived in
the previous section depends on knowledge of the {\it aggregate epidemic state}, i.e., the values of $s$, $i_{u}$, $i_{d}$ and $r$ at all times. In practice, because this information is not readily available, the state of
the epidemic must be estimated from detected infections. Additionally, the policy makers must  know the model parameters. While many of such parameters are related to properties of the disease,  the  transmission rate is a function of people's behavior \cite{weitz2020awareness}, is typically time-varying and needs to be estimated from data.
This is a nontrivial problem, since the dynamics induced by the SIR model is
highly nonlinear and a given path of infections can be due to different
$\beta (t)$  trajectories coupled with different initial conditions.

To address these problems, we propose the use of \textit{%
baseline testing} with a constant rate $\theta _{B}$ to complement the 
\textit{adaptive testing} policy derived above, see Fig.~\ref{fig:scheme}.
Importantly, the objective of baseline testing is not to
control the epidemic, but rather to collect enough data to robustly estimate
the state of the epidemic and the parameter $\beta(t)$. The policy maker can
then use the estimated state and parameter to implement the optimal
adaptive policy discussed in Section \ref{sec:opt}.

\begin{figure}[h]
\centering
\includegraphics[width=3.5in]{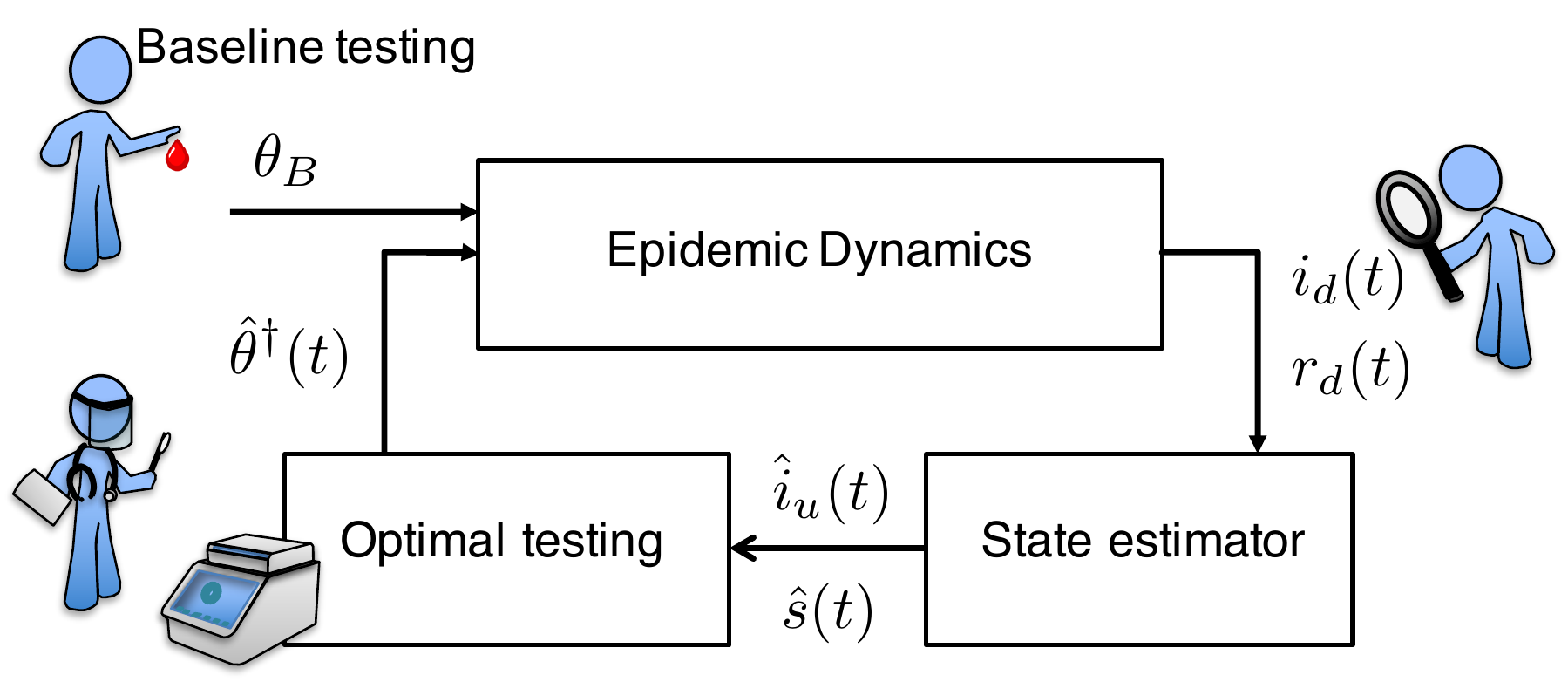}
\caption{Schematic of the proposed procedure. Measurements of detected-infected and
detected-recovered individuals ($i_d(t),r_d(t)$) as defined in Eq.\ \eqref{eq:SIR_estimation} obtained via baseline testing
with rate $\protect\theta_B$ are used to estimate the aggregate
epidemic state ($\hat s(t), \hat i_u(t)$), which is then used to compute
the optimal adaptive testing rate ($\hat \theta^\dagger(t)$%
) according to the results of Theorem~\protect\ref{optimal} (the hat symbol denotes the fact that the optimal testing rate is evaluated as a function of the estimated state). The objective
of adaptive testing is to contain the fraction of infected-undetected
individuals below the desired threshold $i_{\max}$.}
\label{fig:scheme}
\end{figure}

We next argue that state estimation may be infeasible with molecular testing,
which is highly  sensitive but detects infections only during a short
window of time. Intuitively, 
detection of current infections from molecular testing is not  always sufficient to
identify whether a given trajectory of infections is due to a particular time
path of $\beta (t)$ coupled with a given set of initial conditions, or  to a
different time path of transmission coupled with a different set of initial
conditions.  In contrast, such identification is always possible with baseline serology testing.

To illustrate these points, we consider a more detailed model where we partition the
recovered population into recovered-undetected ($\textup{R}_\textup{u}$), consisting of
individuals who had the disease but are not recorded as having immunity
(because they were not diagnosed with either test) and recovered-detected ($%
\textup{R}_\textup{d}$), which consists of individuals that are known to have immunity
because they were either detected during their illness or at a later time
(via serology testing).
Correspondingly, we consider an augmented model that includes both
adaptive and baseline testing with different sensitivities (Fig. \ref%
{fig:sketch_SIR}B). All individuals are tested via adaptive molecular testing
with rate $\theta(t)$ and sensitivity $\eta$, and via baseline testing with
sensitivity $\eta_{BI}$ for the detection of current infections and $%
\eta_{BR}$ for the detection of past infections. The model equations read: 
\begin{equation}  \label{eq:SIR_estimation}
\begin{aligned}
\frac{d{s}(t)}{dt} &= -\beta(t) s(t) i_u(t) \\
\frac{d{i}_u(t)}{dt} &=
\beta(t) s(t) i_u(t) - \gamma i_u(t) -\eta \theta(t) i_u(t) - \theta_B
\eta_{BI} i_u(t)\\
\frac{d{i}_d(t)}{dt} &= \eta \theta (t) i_u(t) +\theta_B
\eta_{BI} i_u(t) - \gamma i_d(t)\\
\frac{d{r}_u(t)}{dt} &= \gamma i_u(t) - \theta_B
\eta_{BR} r_u(t)\\
\frac{d{r}_d(t)}{dt} &= \gamma i_d(t) + \theta_B \eta_{BR} r_u(t),
\end{aligned}
\end{equation}
where the last equation is redundant since $r_d=1-s-i_u-i_d-r_u$.

 We use the more detailed model in Eq.\ \eqref{eq:SIR_estimation} to study the
system's observability, that is, the question of whether an outside observer or
policymaker can estimate the underlying state from detected cases. We prove
that even when $i_{d}(t)$ is observed perfectly and continuously in time
but there is no detection of recovered individuals, the underlying state cannot always be estimated  (see Lemma \ref{lem:obs1} in the Materials and Methods Section \ref{sec:observability}).
Instead, we prove that 
when serology testing is used, which  allows the correct reconstruction of the time
path of both $i_{d}(t)$ and $r_{d}(t)$ via observations of both ongoing and past infections, the underlying state can always be estimated (see Lemma \ref%
{lem:obs2} in  the Materials and Methods  Section \ref{sec:observability}). This
result does not depend on the frequency of baseline testing, nor on the exact
sensitivity of serology testing.

\subsection{Extensions}\label{sec:ext}

In the previous sections we presented a rigorous analysis of optimal testing
and observability for a simple yet insightful deterministic SIR model. Such
results are derived under the assumption of perfect and continuous time
observations. We next discuss some extensions to account for non-idealities
encountered in practice.

First, we consider a model where individuals are detected not only via the
testing program but also because they may become symptomatic. Specifically,
we assume that infected-undetected individuals may develop symptoms and thus
become infected-detected with rate $\kappa $ (this
leads to an additional flow from infected-undetected to infected-detected
with rate $\kappa$ as detailed in  Eq.\ \eqref{eq:SIR_estimationE} in the
Supplementary Materials).
Accordingly, we consider a variant of the original optimization problem
where the constraint is imposed on the \textit{total} number of infected
individuals, instead of infected-undetected individuals only. We also impose
a constraint on the maximum testing rate, modeling daily limitations in
processing capacity. Overall, this results in the following extended optimal
control problem:
\begin{equation}  \label{eq_full_problem}
\begin{aligned} \min_{\theta(\cdot)\ge 0} &\quad
\int_{t=0}^{\infty}\theta(t) dt \\ \text{such that:} & \quad i_u(t)+i_d(t)
\leq i_{\max} \quad \forall t \\ &\quad \theta(t) \leq \theta_{\max} \quad
\forall t. \end{aligned}
\end{equation}
To find the optimal adaptive testing policy $\theta^{*}(t)$ for the
extended problem of Eq. \eqref{eq_full_problem}, we adopted a numerical approach
using the interior point optimizer library within the GEKKO optimization
suite \cite{beal2018gekko,wachter2006implementation}. The
optimal testing policy, computed numerically using parameters taken from the
literature on the COVID-19 epidemic (see Materials and
Methods Table \ref{table:parameters} and Section \ref{sec:parameters}) is shown in Fig. \ref{fig:Optimal_SIR2}. 
\begin{figure}[h]
\centering
\includegraphics{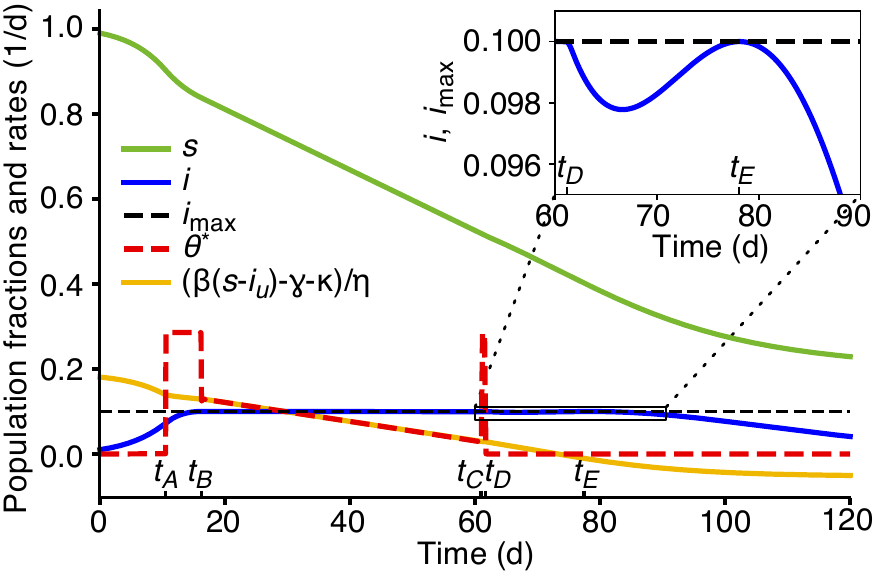}
\caption{Optimal testing strategy for the extended problem \eqref{eq_full_problem}. The green and
blue curves are, respectively, the fractions of susceptible and infected
individuals in the population, respectively. The black, dashed line represents the
constraint on the total fraction of infected individuals, $i=i_u+i_d\leq i_{\max}$. The
optimal testing policy (red, dashed curve) is equal to zero at first, switches to its maximum
value $\protect\theta_{\max}$ at time $t_A$, such that $i_u+i_d$ reaches the
constraint $i_{\max}$ with zero derivative at time $t_B$. Between times $t_B$
and $t_C$, the optimal testing policy is equal to $\protect(\beta %
(s(t)-i_u(t)) - \protect\gamma-\kappa)/\eta$ (yellow curve), which keeps $i_u+i_d=i_{\max}$. At time $%
t_C $, the optimal testing policy switches back to $\protect\theta_{\max}$
until time $t_D$, after which it is equal to zero. The times $t_C$ and $t_D$
are such that, after $t_D$, the total fraction of infected individuals grows
initially, reaching the constraint $i_{\max}$ tangentially (inset), and then
decreases to zero. The switching times can be computed analytically, given
the initial condition (as discussed in the Supplementary Materials).  For illustration purposes, we set $i_{\max}=0.1$ and $\eta=1$, while the other parameters are as in Table \ref{table:parameters}.}
\label{fig:Optimal_SIR2}
\end{figure}
Remarkably, the optimal adaptive testing policy for the extended
optimization problem of Eq.\ \eqref{eq_full_problem} follows the same
principle as the optimal testing policy for the original optimization problem of Eq.\
\eqref{eq_reduced_problem}, that is, it aims at keeping the constrained
quantity ($i_u$ for Eq.\ \eqref{eq_reduced_problem} and $i_u+i_d$ for Eq.\
\eqref{eq_full_problem}) at the threshold $i_{\max}$ for as long as
possible, in order to bring the epidemic as fast as possible to a point
after which it naturally goes to extinction. Two differences arise in
 the testing policy that optimizes the modified  problem  \eqref{eq_full_problem}.
First, in the original problem, one can afford to delay testing right until
the time at which $i_u$ reaches the constraint $i_{\max}$. This is possible
because in the original model the first derivative of the constrained
quantity, $di_u/dt= i_u(\beta s - \gamma -\eta \theta)$, depends explicitly
on the testing frequency $\theta$, and thus one can set the testing rate to $%
\theta=(\beta s - \gamma)/\eta$ to instantaneously ensure $d i_u/dt=0$. In the
extended model, instead, the first derivative of the constrained quantity $%
i_u+i_d$ does not depend directly on the testing frequency $\theta$.
Therefore, one cannot instantaneously impose $d(i_u+i_d)/dt=0$ and thus
testing must start before the constrained quantity $i_u+i_d$ reaches the
threshold $i_{\max}$. The optimal testing policy thus switches from $%
\theta=0$ to $\theta=\theta_{\max}$ at a time $t_A$ (for which $%
i_u(t_A)+i_d(t_A)<i_{\max}$) such that $i_u+i_d$ reaches $i_{\max}$ with zero
derivative at time $t_B>t_A$ (Fig. \ref{fig:Optimal_SIR2}). After time $t_B$, the optimal testing strategy switches to a frequency that maintains $i_u+i_d$ at the $i_{\max}$ value
(the specific rate can be computed from equating the second derivative of $%
i_u+i_d$ to zero). 
Second, $i_u+i_d$ can naturally decrease even before herd immunity if $i_d>0$. For this reason, towards the end of the epidemic the optimal testing
policy for the extended problem of Eq.\ \eqref{eq_full_problem} adopts a second
phase with maximal testing frequency $\theta_{\max}$ that decreases $i_u$
and increases $i_d$ up to a point when, if testing is stopped, $i_u+i_d$
naturally remains below the threshold $i_{\max}$ for all subsequent times
(it increases initially but again reaches the threshold tangentially, see inset of Fig. \ref{fig:Optimal_SIR2}). 
The switching times can be characterized analytically as discussed in the
Supplementary Materials.

Next, we allow the dynamics of the epidemic to be governed by a stochastic
process rather than a deterministic (albeit time-varying) one. This is, of course, more realistic given the stochastic
nature of transmissions and the time-varying and stochastic transition
of individuals across different compartments. Additionally, we assume that observation of  detected cases happens at discrete time instants (e.g. daily) instead of continuously. To deal with this type of
non-idealities and stochasticity  in our state estimation, we propose the use of a
state-estimator which, given observations at
discrete time instants $t_{k}$, produces estimates of the state of the
system (denoted by $\hat{s},\ \hat{\imath}_{u},\ \hat{\imath}_{d},\ \hat{r}%
_{u},\ \hat{r}_{d}$), which can then be used to implement the adaptive
testing policy. 
 For the purpose of
this analysis, we assume that $\beta $ is known and constant, and use an
extended Kalman filter with state constraints as state estimator (Materials
and Methods Section \ref{sec:kalman})  coupled with a system size expansion of the master equation governing the probability distribution of the stochastic model to derive the dependence of
process noise on the epidemic state and population size (Materials and Methods Section \ref{sec:stochastic_model}). Extensions to unknown and time-varying $\beta $ are discussed
in the Supplementary Materials Section \ref{sec:beta_var}. To validate our procedure in the presence of
non-idealities, we used a receding horizon implementation $\hat{\theta}^*$ of the optimal
testing policy derived for the deterministic SIR model (see Materials and Methods Section \ref{sec:receding}). 
Fig. \ref{fig:kalman}A shows the performance of the state estimator and of the
adaptive testing policy for multiple stochastic realizations with $%
\theta _{B}=1/14$ d$^{-1}$ (where d stands for day). The extended Kalman filter provides good
estimates (black, dashed curves in Fig. \ref{fig:kalman}C-D) of the real
state of the epidemic (blue and orange curves), which lies within the
confidence bounds of the estimate. In addition, the time-varying testing
rate implemented using the estimated state is effective in maintaining the
number of infected individuals around the desired threshold $I_{\max
}=Ni_{\max }$ (where $N$ is the population
size). Fluctuations of order $\sqrt{N}$ around such threshold are to be expected as the epidemic is simulated
as a stochastic, Markov process (Materials and Methods Section  \ref{sec:stochastic_model}). Finally, we show that the
 mean of the receding horizon testing policy, computed across realizations of the stochastic model of epidemics, follows the
optimal policy derived for the deterministic SIR model (Fig. \ref{fig:kalman}B).

\begin{figure}[]
\centering
\includegraphics{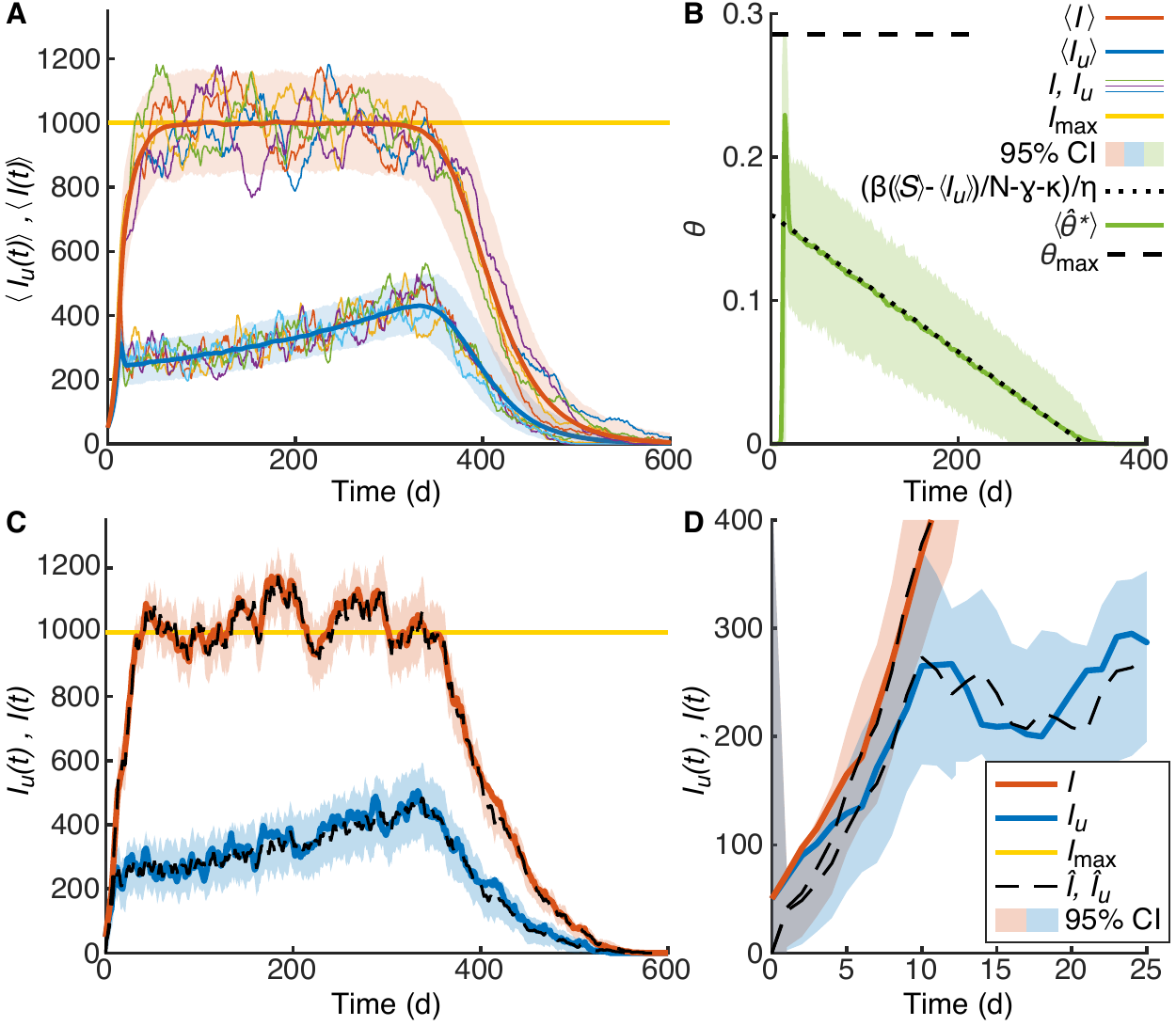}
\caption{Control of stochastic trajectories by using a receding horizon
version of the optimal testing policy $\protect\theta^*$ in combination with
baseline serology testing with rate $\protect\theta_B=1/(14 \textup{\ days}%
) $. Panel A shows the mean number of infected-undetected individuals $%
\langle I_u(t) \rangle$ (blue, thick curves), the mean total number of
infected individuals $\langle I(t) \rangle$ (orange, thick curves) across $%
500$ realizations and $I_u(t)$, $I(t)$ in five, randomly selected stochastic
trajectories (solid, thin lines). Note that capital letters denote absolute numbers instead of fractions of individuals. Colored bands are $95\%$ empirical
confidence intervals and the thick, yellow lines show the value of $I_{\max}$%
. Panel B shows the mean molecular time-varying testing rate in the simulations
(thick, green curve) and its $95\%$ confidence interval. The dotted, black
line shows $(\protect\beta (\langle S \rangle-\langle I_u\rangle)/N-\protect\gamma-\protect%
\kappa)/\eta$, which is the functional form of the optimal adaptive testing
rate $\protect\theta^{*}(t)$ for the deterministic SIR model in the interval [$%
t_B,t_C$] (Eq.\ \protect\ref{optimalBC}). The black, dashed line shows the
maximum testing rate $\protect\theta_{\max}$. Panel C shows the number of
infected-undetected individuals $I_u(t)$ (blue curve) and the total number
of infected individuals $I=I_u(t)+I_d(t)$ (orange curve) in a single
realization. The estimated number of infected-undetected individuals $\hat
I_u(t)$ and estimated total number of infected individuals $\hat I(t)$ are
shown with black, dashed curves. The 95$\%$ confidence intervals for $\hat
I_u(t)$ and $\hat I(t)$ computed using the predicted variance estimated
according to the extended Kalman filter are shown as colored bands. The
infected threshold value $I_{\max}$ is shown as a yellow line. Panel D shows
a zoom of the initial phases of the epidemic highlighting the accuracy of
the Kalman filter estimates. Model parameters and initial conditions are as in the Materials and Methods Table \ref{table:parameters} and Section \ref{sec:parameters}.}
\label{fig:kalman}
\end{figure}

\section{Discussion}

A major lesson from the recent COVID-19 crisis is that, in the absence of
comprehensive vaccines and therapeutic solutions, rapid testing and
isolation become crucial tools to contain the spread of a pandemic. In this paper, we
developed an approach to determine an optimal testing strategy, relevant
especially when there are scarce or expensive testing resources.

Our approach has two basic pillars. First, we showed that, in the context of
a classic SIR model, when the epidemic state in terms of infected,
recovered and susceptible individuals is known and the objective can be
formulated as keeping the number of undetected infections below a certain
threshold, then the optimal testing strategy takes a simple form, similar to
a most rapid approach path. In particular, there should be no testing until
the aforementioned threshold is reached, and thereafter, testing resources
should be used to keep infections at this threshold until herd
immunity is reached and the epidemic starts disappearing naturally. The standard
molecular tests, which have high accuracy, are crucial for this result, because
they enable the identification and isolation of infected individuals.

The second pillar of our approach turns to the identification of the
epidemic state. Our optimal testing policy crucially depends on such
knowledge, but where does this knowledge come from? We tackle this question
by adopting a state estimation framework, where the underlying state is
unknown but can be estimated from the sequence of infections and additional
information obtained from testing. Though molecular tests are also useful in this
context (because they reveal the  trajectory of infections), our main result in
this part is that this information by itself is not sufficient for
identifying the underlying state. This is because the dynamics of 
the SIR model are highly nonlinear and dependent on initial conditions, and
it is not always possible to tell apart whether a given sequence of
infections is due to one of many time-varying paths of transmission rates
coupled with different initial conditions. Instead, we showed that
serology testing which is lower-accuracy, cheaper and longer-range (in
terms of estimating past infections) can be useful to disentangle this information.

More specifically, we proposed a two-pronged approach in which baseline
serology testing is used to collect information about the state of the
epidemic, and the more costly and sensitive molecular testing is adaptively
deployed based on such information (Fig.~\ref{fig:scheme}). Our analysis
formalizes the notion that serology offers advantages as a baseline testing
tool not only because of cost benefits, but also because it conveys
information about past infections, which proves fundamental to correctly and
timely estimate the state of the epidemic. We then showed that, based on
information about the state of the epidemic, optimal adaptive molecular
testing can be adopted and implemented.

Inevitably for a mathematical analysis based on a stylized model, our
approach simplified many aspects of the problem. Our extensions dealt with
two such aspects. First, we showed that similar insights apply in the
context of an SIR model in which both detected-infected and
undetected-infected numbers have to be kept below a certain threshold.
Second, our state estimation techniques apply even when we are dealing with
a stochastic model of epidemics. 

Our analysis suggests that there are tangible gains from the proposed approach. Fig.~\ref{fig:cost} shows that our two-pillar approach with state
estimation plus optimal testing can lead to significant reductions of
overall cost with respect to constant testing strategies, leading to up to
60\% cost reduction for the parameters we investigated.

\begin{figure}[]
\centering
\includegraphics{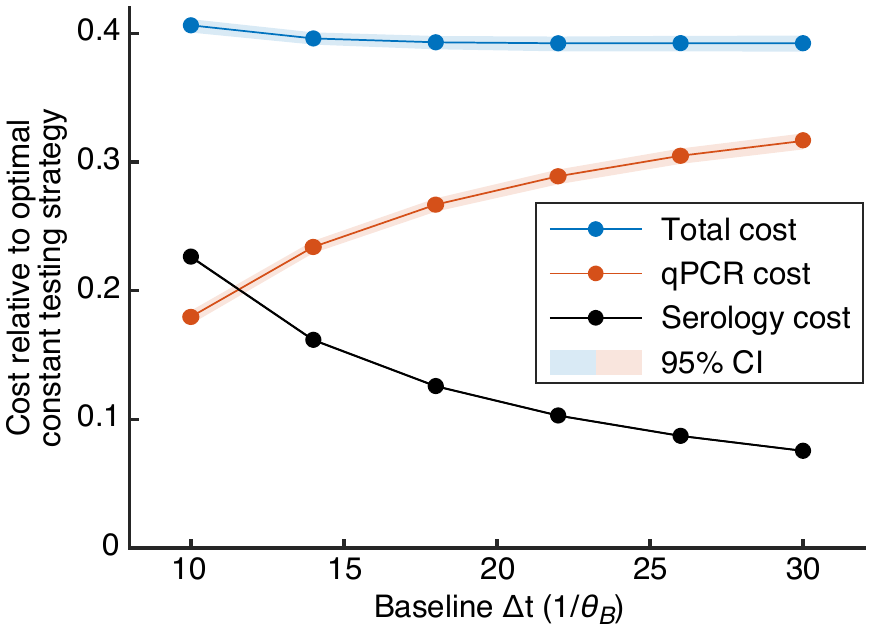}
\caption{Cost of the optimal testing policy for the optimization problem \protect\ref%
{eq_full_problem}, as a function of different baseline testing rates $%
\protect\theta_B$. Costs are normalized with respect to the cost of the
constant testing policy with the minimum testing frequency required to
maintain the constraints $I_u+I_d\leq I_{\max}$ at all times (Materials and
Methods Section \ref{sec:constant_strategy}). Data points connected by straight lines are mean statistics across $1000$ simulations of the stochastic trajectories. Shaded bands represent 95$%
\%$ confidence intervals. Orange and black data points report the 
relative contribution of adaptive qPCR testing and baseline serology testing to the
total cost (blue points), respectively. All adaptive policies induce significant cost
savings, up to 60\% reduction with respect to the cost under the optimal constant
testing strategy derived in Section \ref{sec:constant_strategy} for the parameters investigated. Model parameters and initial conditions are as in the Materials and Methods Table \ref{table:parameters} and Section \ref{sec:parameters}.}
\label{fig:cost}
\end{figure}

In concluding, we make three additional remarks. First, optimality of the proposed
approach is claimed in terms of the problems formulated in Eqs.~\eqref{eq_reduced_problem} and \eqref{eq_full_problem} where the only
epidemic constraint comes from keeping infections below a desired threshold $i_{\max }
$. Strategies that achieve
such an objective are typically classified as \textquotedblleft containment
strategies\textquotedblright , since their objective is to contain the
disease to a state that can be handled by the health system. This is very
different from \textquotedblleft eradication strategies\textquotedblright\
where instead the objective is to eradicate the disease as fast as possible.
It is important to note that the more costly constant testing strategy
detailed in the Materials and Methods Section \ref{sec:constant_strategy} would lead to faster eradication of the
disease than the adaptive strategy suggested here, but testing would
need to continue indefinitely to ensure that outside infections would not
create further waves, as for the parameters considered in the simulations
herd immunity is not reached before eradication under constant
testing. Whether eradication or containment strategies should be preferred
depends on considerations about testing availability, possible long-term
effects of infections on the health of individuals \cite{davis2020longterm} and the impact of the
epidemic on the economy, and is outside of the scope of this work. Our
objective here was to derive the optimal containment strategy for cases when
eradication is simply not possible, e.g., because of test scarcity or budget
limitations.

Second, we derived our results for a standard SIR model (yet with extensions
to testing and symptomatic individuals). Our objective was to derive
analytic insights for a model of epidemics that is general enough to encode
the core traits of an epidemic without getting lost in the details of
specific diseases. Clearly, caution is required when applying our findings
in the field and additional steps are needed to validate our suggestions
with detailed models of any specific disease before translation to practice.
We note especially that we did not account explicitly for delays due to test
processing time in our model. However, the fact that serology has typically
a faster turnaround time than qPCR is an additional argument in support of
serology as a baseline testing tool.

Finally, we investigated the robustness of our procedure to sources of
stochasticity that are intrinsic to the spread of an epidemic, and found that using information on the expected scaling of
fluctuations with the population size and the state of the epidemic, one can
apply the testing strategies developed for the deterministic SIR model to
control epidemics even in the presence of intrinsic stochasticity. In
practical applications, one may encounter additional sources of exogenous
stochasticity due to people's behavioral responses, changes in policies,
infections coming from external sources such as neighboring states, seasonal
changes, etc. We believe that a dual approach where our theoretical results
are used as a guideline for formulating candidate policies that are then
tested extensively with numerical approaches  adopting an ensemble of models (as in \cite{ray2020ensemble,viboud2019future})
would be a powerful tool in the control of future pandemics.

\section{Materials and Methods}

\subsection{Model parameters}\label{sec:parameters1}

In our numerical studies, we used model parameters that have been
used in the literature to describe the spread of COVID-19. Table \ref%
{table:parameters} reports the parameter values, along with the
corresponding sources. Note that we decided to use low sensitivity
both for molecular and serology testing to be conservative and to account for the
fact that infected agents in the initial incubation period may not be
detectable.

{ \begin{table}[htp]
\begin{center}
\begin{tabular}{c|c|c}
Parameter & Value & Sources \\ \hline
Transmission rate ($\beta$) & $0.3$ d$^{-1} $ & \cite{della2020network} \\ 
&  & \cite{gatto2020spread} \\ 
&  & \cite{Bertozzi16732} \\ 
Recovery rate ($\gamma$) & $1/14$ d$^{-1}$ & \cite{della2020network} \\ 
Rate of symptoms development ($\kappa$) & $0.04$ d$^{-1}$ & \cite%
{McAloone039652} \\ 
&  & \cite{gatto2020spread} \\ 
qPCR sensitivity ($\eta$) & $0.9$ & \cite{Watsonm1808} \\ 
Serology sensitivity (current infections) ($\eta_{BI}$) & $0.6$ & \cite%
{roche}, \cite{serology} \\ 
&  & \cite{publichealthengland} \\ 
Serology sensitivity (past infections) ($\eta_{BR}$) & $0.8$ & \cite{roche}, 
\cite{serology} \\ 
&  & \cite{publichealthengland} \\ 
Cost of serology relative to qPCR  ($c^{ser})$ & 0.4 & %
\cite{forbes1} \\ 
&& \cite{forbes2} \\ 
Maximum adaptive testing rate ($\theta_{\max}$) & $2/7$ d$^{-1}$ &  \cite{cornell}
\end{tabular}%
\end{center}
\caption{Parameter values and corresponding sources. Our estimate for $%
\protect\beta$ is a compromise between different estimates reported in the
literature on COVID-19. The rate of symptoms development $\protect\kappa$
was estimated as the product of the probability of becoming symptomatic,
times the incubation rate. }
\label{table:parameters}
\end{table}}

\subsection{Constant testing strategy}\label{sec:constant_strategy}
 One possibility for controlling an epidemic with a constant testing rate is selecting a value of the testing rate $ \theta_{const}$ that guarantees a basic reproduction number \cite{daley2001epidemic} smaller than unity. For  the model {with symptomatic agents
(Supplementary Materials Eq.~\ref{eq:SIR_estimationE}) }  the basic reproduction number is 
$$R_0= \frac{\beta}{\gamma+\kappa+ \eta \theta_{const}},$$
corresponding to the number of secondary infections generated by an individual when he/she is free to circulate and the rest of the population is made entirely of susceptible individuals. Setting the basic reproduction number to unity and solving for   $\theta_{const}$ leads to a testing rate  guaranteeing that the number of infected undetected is monotonically decreasing. This is a more restrictive condition than what is needed to satisfy the constraint $i_u+i_d\le i_{\max}$ in Problem \eqref{eq_full_problem}. Indeed,  a lower testing rate for which the fraction of infected-undetected initially increases but reaches the constraint $i_{\max}$ tangentially would  suffice. To be  fair in the comparison with the adaptive testing policy, we next derive such lower constant testing rate as a function of the fractions 
$s_0$ and $i_0$ of susceptible and infected individuals (all assumed to be
undetected) at time $t=0$. Note that to satisfy the constraint with the least amount of constant testing, $\theta_{const}$ should be such that $i=i_u+i_d$ reaches the constraint $%
i_{\max}$ tangentially (i.e., only once at time $\bar t$ with first derivative equal to zero).
Our objective is to derive  a series or relations between the initial state ($s(0),i_u(0), i_d(0)$), the state at time $\bar t$ (i.e., $s(\bar t),i_u(\bar t), i_d(\bar t)$), and the testing rate $\theta_{const}$. We  then exploit these relations to solve for $\theta_{const}$. From the discussion above we have \begin{equation}  \label{eq:tangent1} i_u(\bar t)+i_d(\bar t)=i_{\max}\end{equation} and 
\begin{equation}  \label{eq:tangent}
\frac{d i(\bar t)}{dt}=\beta s(\bar t) i_u(\bar t)-\gamma i(\bar t)=\beta s(\bar t)
i_u(\bar t)-\gamma i_{\max}=0.
\end{equation}
Next, by Eq.\ \eqref{eq:SIR_optimal} it holds ${ds(t)}/{dt} = -\beta s(t) i_u(t)$ and $ \textstyle{di_d(t)}/{dt}+ \textstyle{dr(t)}/{dt} = (\eta \theta_{const}   + \gamma) i_u(t) $ leading to 
\begin{align*}
 \frac{1}{s(t)} \frac{ds(t)}{dt} +  \frac{\beta}{(\eta \theta_{const}   + \gamma)} \frac{d(i_d(t)+r(t))}{dt} =0.
 \end{align*}
Integrating this equation we can derive a constant of motion for the epidemic which leads to the following relation between the epidemic state at time zero and at time $\bar t$:
\begin{equation}  \label{eq:harko}
 \ln\left(\frac{s(\bar t)}{s_0}\right)-%
\frac{\beta}{\gamma+\eta\theta_{const}}\left(s(\bar t)+i_u(\bar t)-s_0-i_0\right)=0.
\end{equation}
Finally, integrating $d i_d(t) /dt= \eta\theta_{const} i_u(t)-\gamma
i_d(t) $ in the interval $[0,\bar t]$ we obtain 
\begin{equation}  \label{eq:idot}
i_d(\bar t)= \eta\theta_{const} e^{-\gamma  \Delta t(s(\bar t),s_0,i_0) } \int_{s(\bar t)}^{s_0} 
\frac{ e^{\gamma \Delta t(s,s_0,i_0)} }{\beta s}
ds=i_{\max}-i_u(\bar t),
\end{equation}
where we used $i_d(0)=0$ and a reformulation  $\Delta t(\tilde s,s_0,i_0)$ of the
time interval such that $s(\Delta t)=\tilde s$  as a function of the epidemic state as introduced in \cite%
{harko2014} and detailed in Supplementary Materials Eq.\
\eqref{g}.  Solving Eqs.\
\eqref{eq:tangent1}-\eqref{eq:idot}  for the unknowns $s(\bar t), i_u(\bar t), i_d(\bar t), \theta_{const}$
leads to the minimum constant testing rate $ \theta_{const}$ that ensures $%
i(t)=i_u(t)+i_d(t)\leq i_{\max}$ for all $t$.

\subsection{Observability notions}\label{sec:observability}

We formally define observability for a parametric system as follows.

\begin{definition}
A dynamical system $dx(t)/dt=g(x(t),\beta(t))$ with state $x(t)$ and
time-varying parameter $\beta(t)$ is observable from the output $%
y(t)=h(x(t)) $ if for any two observed outputs $y_1(t)$ and $y_2(t)$, the
condition $y_1(t)\equiv y_2(t)$ for all $t$ implies $x_1(0)=x_2(0)$ and $%
\beta_1(t)\equiv \beta_2(t)$ for all $t$.
\end{definition}

\begin{lemma}[Observability from molecular testing]
\label{lem:obs1} Consider the system of Eq.\ \eqref{eq:SIR_optimal} with
state $x(t)=[s(t),i_u(t),i_d(t),r(t)]$ and continuous time output $%
y(t)=i_d(t)$. Suppose that $\eta \theta(t)>0$ for all $t$ and that $\gamma$ is
known.

\begin{enumerate}
\item If $\beta(t)\equiv \beta$, the system is observable.

\item If $\beta(t)$ is time-varying, the system is not observable.
\end{enumerate}
\end{lemma}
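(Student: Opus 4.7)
The plan is to treat each case in turn, exploiting the fact that the output $y(t)=i_d(t)$ plus knowledge of $\gamma$ and $\eta\theta(t)$ lets us algebraically reconstruct several hidden quantities. My first step for both parts is to note that, since $\dot i_d = \eta\theta\, i_u - \gamma i_d$ and $\eta\theta(t)>0$, the infected-undetected trajectory is recovered from the output by
\begin{equation*}
i_u(t) = \frac{\dot y(t) + \gamma\, y(t)}{\eta\,\theta(t)}.
\end{equation*}
Using this in the $i_u$ equation gives the further identity
\begin{equation*}
q(t) \;:=\; \beta(t)\, s(t) \;=\; \frac{\dot i_u(t)}{i_u(t)} + \gamma + \eta\,\theta(t),
\end{equation*}
so the product $\beta(t)s(t)$ is also determined by the output (assuming $i_u>0$, which holds on any nontrivial trajectory before extinction).

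For the constant-$\beta$ case, I would separate $\beta$ from $s$ by differentiating $q$. Since $\dot s/s = -\beta i_u$ from the $s$ equation and $\beta$ is constant, one has $\dot q/q = \dot s/s = -\beta\, i_u$, which yields
\begin{equation*}
\beta \;=\; -\,\frac{\dot q(t)}{q(t)\,i_u(t)},
\end{equation*}
a quantity entirely determined by $y(t)$ and $\theta(t)$. Hence $\beta$ is identified, then $s(t)=q(t)/\beta$, and finally $r(t) = 1 - s(t) - i_u(t) - i_d(t)$. Consequently two trajectories producing the same $y(\cdot)$ must have the same $\beta$ and the same initial state, proving observability.

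For the time-varying $\beta(t)$ case, my plan is to exhibit a one-parameter family of distinct solutions sharing the same output. Starting from any admissible trajectory $(s_1,i_{u,1},i_{d,1},r_1,\beta_1)$, I would define, for a small constant $c$ with $0<s_1(t)+c<1$ and $0<r_1(t)-c<1$ for all $t$,
\begin{equation*}
s_2(t) := s_1(t)+c,\quad r_2(t):=r_1(t)-c,\quad i_{u,2}:=i_{u,1},\quad i_{d,2}:=i_{d,1},\quad \beta_2(t):=\beta_1(t)\,\frac{s_1(t)}{s_2(t)}.
\end{equation*}
Since $\beta_2 s_2 = \beta_1 s_1$, the $i_u$, $i_d$, and $s$ equations are satisfied by the new trajectory, and $\dot r_2=\dot r_1=\gamma(i_{u,1}+i_{d,1})$, so it is a bona fide solution of \eqref{eq:SIR_estimation}. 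It has the same output $y(t)=i_{d,1}(t)$ but a different initial state and a different $\beta(t)$, which by the definition of observability suffices for non-observability.

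The main subtlety I anticipate is verifying that the constructed perturbed trajectory stays admissible (i.e., all compartments remain in $[0,1]$) so that the counterexample is genuine; this is handled by taking $|c|$ small enough relative to the ranges of $s_1$ and $r_1$. A secondary technical point in the constant-$\beta$ case is justifying that $\dot q$ is not identically zero over an interval — which would block the inversion for $\beta$ — but on any genuinely epidemic trajectory $\dot s\neq 0$ a.e., so $\dot q\neq 0$ except in trivial cases, making the formula for $\beta$ well-defined where needed.
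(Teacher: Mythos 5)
Your proposal is correct and follows essentially the same route as the paper's proof: part 1 reconstructs $i_u$ from $\dot i_d+\gamma i_d$, then $\beta s$ from the $i_u$ equation, then isolates $\beta$ by one more differentiation (your logarithmic-derivative identity $\dot q/q=\dot s/s=-\beta i_u$ is a cleaner packaging of the paper's second-derivative formula), and part 2 is the same counterexample the paper uses, since shifting $s$ by a constant $c$ and setting $\beta_2=\beta_1 s_1/s_2$ is exactly what the paper's choice of $\bar s(0)\neq s(0)$ with $\bar\beta=\beta s/\bar s$ produces. The only slips are cosmetic: the counterexample should reference Eq.~\eqref{eq:SIR_optimal} rather than Eq.~\eqref{eq:SIR_estimation}, and the worry about $\dot q\equiv 0$ is moot since your formula divides by $q\,i_u$, not $\dot q$.
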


\begin{lemma}[Observability from serology testing]
\label{lem:obs2} Consider the system of Eq.\ \eqref{eq:SIR_estimation} with
state $x(t)=[s(t),i_u(t),i_d(t),r_u(t),r_d(t)]$ and continuous time output $y(t)=%
\left[i_d(t),r_d(t)\right]$. If $\eta \theta(t)+\eta_{BR}\theta_B>0$ and $%
\gamma$ is known, the system is observable.
\end{lemma}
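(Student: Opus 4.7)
The plan is to establish observability by inverting the observation map in stages. Suppose two trajectories $(x^{(1)}, \beta^{(1)})$ and $(x^{(2)}, \beta^{(2)})$ of Eq.~\eqref{eq:SIR_estimation} produce the same output $y(t)=[i_d(t), r_d(t)]$. Because the right-hand sides in Eq.~\eqref{eq:SIR_estimation} are polynomial on a bounded state, both solutions are $C^1$, and equal outputs imply pointwise equality of $\dot i_d$ and $\dot r_d$ as well. I would first dispatch the structural case $\theta_B \eta_{BR} > 0$ (baseline serology active): solving the $r_d$ equation for $r_u$ gives $r_u(t) = (\dot r_d(t) - \gamma i_d(t))/(\theta_B \eta_{BR})$, so $r_u^{(1)} \equiv r_u^{(2)}$; the $r_u$ equation then yields $i_u(t) = (\dot r_u(t) + \theta_B \eta_{BR} r_u(t))/\gamma$ (using the known $\gamma>0$), so $i_u^{(1)} \equiv i_u^{(2)}$. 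Population conservation $s = 1 - i_u - i_d - r_u - r_d$ forces $s^{(1)} \equiv s^{(2)}$, and in particular $x^{(1)}(0) = x^{(2)}(0)$; finally, $\dot s = -\beta s i_u$ recovers $\beta(t) = -\dot s(t)/(s(t) i_u(t))$ wherever the denominator is nonzero.

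To cover the full hypothesis $\eta \theta(t) + \theta_B \eta_{BR} > 0$, I would complement the serology inversion with a molecular one valid when $\eta \theta(t) > 0$: the $i_d$ equation gives $i_u(t) = (\dot i_d(t) + \gamma i_d(t))/(\eta\theta(t) + \theta_B \eta_{BI})$, identifying $i_u$ directly from the molecular channel. Patching the two inversions then pins down $i_u(t)$ throughout $[0,\infty)$. I expect the main obstacle to be the recovery of $s$ (equivalently, of $r_u$) on sub-intervals where serology is inactive: there, $\dot r_u = \gamma i_u$ determines $r_u$ only up to an integration constant, which must be fixed by continuity-matching with an adjacent serology-active interval---this is essentially why baseline serology, rather than molecular testing alone, is what unlocks identifiability, consistent with the paper's overall thesis and with the negative result in Lemma~\ref{lem:obs1}. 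A comparatively minor secondary issue is handling the degenerate set \{$t$ : $s(t) i_u(t) = 0$\} in the final step; on such intervals either the susceptible pool is exhausted or the infection is extinct, so $\beta$ is observationally inert there and that set can simply be excised from the identification of $\beta$ without affecting the conclusion.
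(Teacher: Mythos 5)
Your main argument is essentially the paper's proof. When $\theta_B\eta_{BR}>0$ you invert the output map in stages exactly as the paper does: the $r_d$ equation gives $r_u=(\dot r_d-\gamma i_d)/(\theta_B\eta_{BR})$, then $i_u$, then $s$ by conservation, then $\beta$ from $\dot s=-\beta s i_u$. The only (harmless) variation is that you recover $i_u$ from the $r_u$ equation, $i_u=(\dot r_u+\theta_B\eta_{BR}r_u)/\gamma$, which needs a second derivative of the output but only requires $\gamma>0$, whereas the paper reads $i_u$ off the $i_d$ equation as $(\dot i_d+\gamma i_d)/(\eta\theta+\theta_B\eta_{BI})$, which tacitly needs that denominator to be positive. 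Either route is fine under continuous, exact observation.

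The second half of your proposal, however, contains a genuine gap. You try to cover the literal hypothesis $\eta\theta(t)+\eta_{BR}\theta_B>0$ by "patching" serology-active intervals with molecular-only intervals and fixing the integration constant in $\dot r_u=\gamma i_u$ by continuity-matching at their boundary. But in the model of Eq.~\eqref{eq:SIR_estimation} both $\theta_B$ and $\eta_{BR}$ are constants, so $\theta_B\eta_{BR}$ is either positive for all time or zero for all time: there are no adjacent serology-active sub-intervals to match against. In the degenerate case $\theta_B\eta_{BR}=0$ (with $\eta\theta(t)>0$), the output $r_d$ satisfies $\dot r_d=\gamma i_d$ and carries no information beyond $i_d$, so the setting collapses to that of Lemma~\ref{lem:obs1} and the same counterexample with time-varying $\beta$ defeats observability; no patching argument can succeed. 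The resolution is that the lemma is intended (and the paper's proof is written) under the assumption that baseline serology is active, i.e.\ $\theta_B\eta_{BR}>0$; once you read the hypothesis that way, your first paragraph alone is a complete proof and the second is unnecessary. Your handling of the set where $s(t)i_u(t)=0$ in the identification of $\beta$ is consistent with Definition~1 and with what the paper leaves implicit.
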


The proofs of these lemmas are provided in the Supplementary Materials.

\subsection{Stochastic model}\label{sec:stochastic_model}

We performed stochastic simulations of a compartmental model of epidemics in
which individuals of a population of size $N$ are assigned to the same
compartments $\textup{S}$, $\textup{I}_\textup{u}$, $\textup{I}_\textup{d}$, $\textup{R}_\textup{u}$ and $\textup{R}_\textup{d}$ as in the deterministic SIR model
with symptomatic individuals (Supplementary Materials Eq.\
\eqref{eq:SIR_estimationE}). In analogy with the deterministic SIR model,
transition rates among states are set to: 
\begin{equation*}
\begin{aligned} W(S-1,I_u+1,I_d,R_u|S,I_u,I_d,R_u) &=\beta \frac{SI_u}{N} &
{\text{(new infection)}}\\
W(S,I_u-1,I_d,R_u+1|S,I_u,I_d,R_u)&=\gamma I_u&
{\text{(recovery of $\text{I}_\text{u}$)}}\\
W(S,I_u-1,I_d+1,R_u|S,I_u,I_d,R_u)&=\left[
\eta \theta+ \kappa +\theta_B \eta_{BI} \right] I_u& {\text{(detection
of $\text{I}_\text{u}$)}}\\ W(S,I_u,I_d-1,R_u|S,I_u,I_d,R_u)&=\gamma I_d& {\text{(recovery
of $\text{I}_\text{d}$)}}\\ W(S,I_u,I_d,R_u-1|S,I_u,I_d,R_u)&= \theta_B \eta_{BR} R_u&
{\text{(detection of $\text{R}_\text{u}$)}} \end{aligned}
\end{equation*}
where $W(X^{\prime }|X)$ is the probability per unit time of transitioning
from state $X$ to state $X^{\prime }$ and the parameters have the same
interpretation as in the deterministic SIR model, and capital letters $S$, $I_u$%
, $I_d$ and $R_u$ indicate the absolute number of individuals in the various
compartments of the stochastic model. The compartment $\textup{R}_\textup{d}$ is not mentioned explicitly, as its
abundance is equal to $N-S-I_u-I_d-R_u$. Unlike the deterministic SIR model, the
stochastic model of epidemics accounts for the fact that the numbers of
individuals in each compartment are integers and that infection, recovery
and detection are stochastic events. As such, the stochastic model is better
suited to describing epidemics in small populations or  the epidemiological dynamics in the initial phases of an epidemic, where number fluctuations can be important. The dynamics of
the stochastic model is governed by the master equation, \cite{vankampen,gardiner2009stochastic}: 
\begin{equation}
\begin{aligned} \frac{\partial P}{\partial t}(S,I_u,I_d,R_u,t) =&\bigg(
\left( \mathbf{E}^{+1}_S \mathbf{E}^{-1}_{I_u}-1 \right) \beta \frac {SI_u}N
+ \left( \mathbf{E}^{+1}_{I_u} \mathbf{E}^{-1}_{R_u}-1 \right) \gamma I_u +
\\ &+ \left( \mathbf{E}^{+1}_{I_u} \mathbf{E}^{-1}_{I_d}-1 \right) I_u
\left( \eta \theta+\kappa + \theta_B \eta_{BI} \right) + \left(
\mathbf{E}^{+1}_{I_d}-1\right) \gamma I_d+\\ &+ \left(
\mathbf{E}_{R_u}^{+1}-1 \right) \theta_B \eta_{BR} R_u \bigg)
P(S,I_u,I_d,R_u,t), \end{aligned}  \label{eq:MASTER}
\end{equation}
where $P(S,I_u,I_d,R_u,t)$ is the probability of being in state $X:=[S,I_u,I_d,R]$ at time $t$ and the transition operator $\mathbf{E}^{+1}_S$
is defined by $\mathbf{E}^{\pm1}_S f(S,I_u,I_d,R_u)=f(S\pm1,I_u,I_d,R_u)$
for a generic function $f$, and similarly for the other operators.\footnote{In Eq.\
\eqref{eq:MASTER}, operators within parentheses act on all the functions of
state variables to their right according to conventional operator
precedence, e.g. $\left( \mathbf{E}^{+1}_S \mathbf{E}^{-1}_{I_u}-1 \right) {%
SI_u} P(S,I_u,I_d,R_u,t)$ is to be interpreted as $\left( \mathbf{E}^{+1}_S 
\mathbf{E}^{-1}_{I_u}-\!1\! \right) {SI_u} P(S,I_u,I_d,R_u,t)\!=\! \mathbf{E}^{+1}_S 
\mathbf{E}^{-1}_{I_u} \!\left( {SI_u} P(S,I_u,I_d,R_u,t)\right)\!-\! {SI_u}
P(S,I_u,I_d,R_u,t)$ $= (S+1)(I_u-1)P(S+1,I_u-1,I_d,R_u,t)- {SI_u}
P(S,I_u,I_d,R_u,t)$.} We simulated trajectories of the stochastic model of epidemics by using the
Gillespie algorithm \cite{gillespie1976general}, with the parameters
reported in Table~\ref{table:parameters}.

For large $N$, Eq. \eqref{eq:MASTER} can be expanded in
powers of  $1/{N}$ following a Kramers-Moyal or system-size expansion \cite{vankampen,gardiner2009stochastic}. Eq. \eqref{eq:MASTER} can be expressed in terms of the rescaled variables $\tilde x:=X/N=[\s,\iu,\id,\ru]$ as follows:
\begin{equation}
\begin{aligned} \frac 1N \frac{\partial p}{\partial t}(\s,\iu,\id,\ru,t) =&\bigg(
\left( \mathbf{E}^{+\frac 1N}_{\s} \mathbf{E}^{-\frac 1N}_{\iu}-1 \right) \beta \s \iu
+ \left( \mathbf{E}^{+\frac 1N}_{\iu} \mathbf{E}^{-\frac 1N}_{\ru}-1 \right) \gamma \iu +
\\ &+ \left( \mathbf{E}^{+\frac 1N}_{\iu} \mathbf{E}^{-\frac 1N}_{\id}-1 \right) \iu
\left( \eta \theta+\kappa + \theta_B \eta_{BI} \right) + \left(
\mathbf{E}^{+\frac 1N}_{\id}-1\right) \gamma \id+\\ &+ \left(
\mathbf{E}_{\ru}^{+\frac 1N}-1 \right) \theta_B \eta_{BR} \ru \bigg)
p(\s,\iu,\id,\ru,t). \end{aligned}  \label{eq:MASTER_rescaled}
\end{equation}
Note that $\tilde x(t)$ is a stochastic process, whereas $x(t)$ as defined in the main text is the solution to the deterministic SIR model. The  right hand side of Eq.~\eqref{eq:MASTER_rescaled} is a function of $\tilde x\pm1/N$. Expanding this function around $\tilde x$  up to the second order ($1/N^2$), one obtains the Fokker-Planck equation:
\begin{equation}\label{eq:fokker}
\frac{\partial p}{\partial t}(\tilde x,t)=-\sum_j \frac{\partial}{\partial \tilde x_j} \left( g_j(\tilde x,\theta) p(\tilde x,t) \right) + \frac 1{2N} \sum_{j,k} \frac{\partial^2}{\partial \tilde x_j \partial \tilde x_k} \left( B_{jk}(\tilde x,\theta) p(\tilde x,t) \right),
\end{equation}
where $g$ is the vector field corresponding to the deterministic dynamics used for computing the optimal testing strategy (see Eq. \eqref{eq_full_problem} and Eq. \eqref{eq:SIR_estimationE} in the Supplementary Materials) and $B$ is the matrix:
$$\small B(\x,\theta)=\left[\begin{array}{cccc}
\beta \s \iu & -\beta \s \iu & 0 & 0 \\
-\beta \s \iu & \left[ \beta \s+\gamma+\eta\theta+\kappa+\theta_B \eta_{BI} \right]\iu & - \left[ \eta \theta+\kappa + \theta_B \eta_{BI} \right]\iu & -\gamma \iu \\
0 & - \left[ \eta \theta+\kappa + \theta_B \eta_{BI} \right]\iu &\left[ \eta \theta+\kappa + \theta_B \eta_{BI} \right] \iu +
\gamma \id & 0 \\
0 & -\gamma \iu & 0 & \gamma \iu+\theta_B\eta_{BR} \ru\end{array}\right].$$
Eq. \eqref{eq:fokker} is known as the diffusion approximation of the master Eq. \eqref{eq:MASTER_rescaled} and describes the probability distribution of a continuous stochastic process specified by the following It\^{o} Langevin equation: \cite{vankampen,gardiner2009stochastic}
\begin{equation}\label{eq:kalman_dynamics}
\frac{d \tilde x}{d t}=g(\tilde x,\theta)+\frac 1{\sqrt{N}}\varepsilon(\tilde x,\theta,t),
\end{equation}
where $\varepsilon(\tilde x,\theta,t)$ is a Gaussian noise with covariance $\langle \varepsilon_j(\tilde x(t),\theta(t),t) \varepsilon_k(\tilde x(t'),\theta(t'),t') \rangle=B_{jk}(\tilde x(t),\theta(t))\delta(t-t')$ and zero mean, where $\delta$ is the Dirac delta. Intuitively, in Eq. \eqref{eq:kalman_dynamics} the first term $g(\tilde x,\theta)$ coincides with the vector field of  the deterministic dynamics while the second  term captures diffusive fluctuations due to stochasticity, whose amplitude depends both on the population size and on the epidemic state.
As detailed in the next section, this approach enables us to characterize the process noise for the extended Kalman filter  with the correct scaling in $N$ and highlights its dependence on the current epidemic state (e.g., the number of infected-undetected) as captured by  the matrix $B$.

\subsection{State estimation for the stochastic simulations}
\label{sec:kalman}

Given the dynamics of Eq. \eqref{eq:kalman_dynamics}, we discuss here how we estimate the state of the epidemic. We  assume the following observation model:

\begin{equation}\label{eq:output}
\tilde y(t_k)=[ \tilde i_d(t_k) ; \tilde r_d(t_k)]= C \tilde x(t_k) +c
\end{equation}
with: 
\begin{equation*}
C=\left[%
\begin{array}{cccc}
0 & 0 & 0 & 1 \\ 
-1 & -1 & -1 & -1%
\end{array}%
\right],\qquad c= \left[%
\begin{array}{c}
0 \\ 
q%
\end{array}%
\right],
\end{equation*}
 that is, only infected-detected and recovered-detected are observed, and we assume discrete observation times $t_k$ (e.g., daily observations). Such
observations can be used to estimate the state of the system via a state
observer, which we implemented using an extended Kalman filter with state
constraints (see \cite{king2008inapparent,pasetto2017real,pasetto2018near,li2020substantial} for other applications of the Kalman filter in the context of epidemiology).

In a Kalman filter, observations $y_k=\tilde y(t_k)$ are used to create an estimate of
the state, denoted by $\hat x(t) =[\hat s(t); \hat i_u(t); \hat i_d(t) ;\hat
r_u(t)]$. The first step is to initialize $\hat x_{0\mid 0}=\mathbb{E}[x(t_0)], P_{0\mid 0}=\mathbb{E}%
[(x(t_0)-\hat x(t_0)) (x(t_0)-\hat x(t_0))^\top]$. Then, the dynamics of the
extended Kalman filter \cite{khalilbook} is computed as follows, at any time step $t_k$:
\begin{enumerate}
\item Predict the next state, given previous observations:
\begin{equation*}
\left\{\begin{aligned} \frac{d\hat{x}(t)}{dt}&=g(\hat{x}(t), \theta(t))\\ \frac
{dP(t)}{dt}&= G(t)P(t)+P(t)G(t)^\top+Q(t) \end{aligned}\right. \ \textup{with}
\ \left\{ \begin{aligned} \hat x(t_{k})&=\hat x_{k\mid k}\\
P(t_{k})&=P_{k\mid k} \end{aligned}\right. \textup{\ and } G(t)=\left. \frac{%
\partial g}{\partial x}\right|_{\hat x(t),\theta(t)}
\end{equation*}
and set $\hat x_{k+1\mid k}=\hat x(t_{k+1}), P_{k+1\mid k}=P(t_k)$
\item Update the prediction, given the current observation:
\begin{align*}
K_{k+1}&=P_{k+1\mid k} C^\top (C P_{k+1\mid k} C^\top + R)^{-1} \\
\hat x_{k+1\mid k+1}&=\Pi_{X_{k+1}}[\hat x_{k+1\mid k} + K_{k+1} (y_{k+1} - C
\hat x_{k+1\mid k})] \\
P_{k+1\mid k+1}&=(I - K_{k+1}C)P_{k+1\mid k},
\end{align*}
where $\Pi_{X_k}$ represents the projection in the feasible set $X_k=\{x\geq
0\mid Cx+c=y_k\}$.
\end{enumerate}

The matrices $Q$ and $R$ are covariance matrices for
the process and measurement noise. In our analysis, we assume $R=0$ (as the number
of infected-detected and recovered-detected is perfectly know by the policy
maker and thus there is no measurement error in Eq. \eqref{eq:output}), while $Q(t)$ is the covariance of the process noise, which is equal to $B/N$ as derived from the expansion of the master equation in Section \ref{sec:stochastic_model}.

\subsection{Testing strategy for the stochastic simulations}\label{sec:receding}

The testing policy derived for the deterministic SIR model is not necessarily
robust to the presence of stochastic fluctuations. For this reason, in the
stochastic simulations we implemented a receding horizon version $%
\hat{\theta}^*(t)$ of the testing policy where at any time $t_k>t_A$ (as
defined in Fig. \ref{fig:Optimal_SIR2})  we computed the constant testing rate needed to drive the total fraction of infected to the threshold $i_{\max}$ in a
horizon of $H$ days (we set $H=3$ d),
assuming that the dynamics follows the deterministic SIR model, i.e. Eq.\ \eqref{eq:SIR_estimationE} in the Supplementary Materials. According to the principles of receding horizon control, such testing rate is applied for one time step
and then a new problem is solved for the next horizon $[t_{k+1},t_{k+1}+H]$
given the new realized state. 
Thus, at every time step $t_k$ the testing rate is set to: 
\begin{equation}  \label{control2}
\hat{\theta}^*(t_k) =%
\begin{cases}
0 & \textup{if } t_k < t_A \\ 
\max (\{0,\min\{ \theta_{\max}, \theta_{rh}(\hat s(t_k),\hat i_u(t_k) \} \}) & \textup{otherwise}%
\end{cases}%
\end{equation}
where $\theta_{rh}$ is the testing rate that would bring the deterministic
system to the constraint $i_u+i_d=i_{\max}$ with zero derivative in a time
horizon $H$, starting from $\hat s(t_k)$ and $\hat i_u(t_k)$ (see Eq.\ \eqref{theta_rec} in the Supplementary
Materials). Note that we assume here that the receding horizon is implemented for any time $t_k>t_A$, where $t_A$ is the optimal time to start testing as computed for the deterministic SIR model. In practice, the policy maker may prefer to implement the receding horizon control from the beginning, for additional robustness and to compensate for the uncertainty of state estimates in the early phases of the dynamics. This has a minor cost implications, since $t_A$ is typically very small with the parameters considered here.

\subsection{Parameters used in simulations of the stochastic model of epidemics} \label{sec:parameters}

Simulations of the stochastic model of epidemics were performed with population size $N=50000$ and constraint $I_{\max}=1000$ corresponding to $2\%$ of the population size. This percentage was chosen for illustration purposes and it roughly corresponds to the peak percentage quarantine capacity estimated to be required for the safe reopening of Cornell's Ithaca NY campus during the COVID-19 pandemic in the Fall 2020 \cite{cornell}. Realizations of stochastic epidemics were initialized with $I(0)=I_u(0)=50$ infected and $S(0)=N-50$ susceptible individuals (the other compartments were initialized at $I_d(0)=R_d(0)=R_u(0)=0$). The other parameters were set to the values in Table \ref{table:parameters}. The initial state estimate for the extended Kalman filter was set to $\hat I(0)=0$ infected and $\hat S(0)=N$ susceptible individuals (the estimates for the other compartments were set to zero). All entries of the initial estimate for the covariance matrix $P$ (see Section \ref{sec:kalman}) of the extended Kalman filter were set to zero, with the exception of the estimate for the variance of $\hat S(0)$ and of $\hat I_u(0)$, which were set to $I_{\max}^2/12$, to reflect a large uncertainty on the initial condition.

\bibliographystyle{apalike}
\bibliography{main_edited}

\clearpage

\pagenumbering{arabic}
\renewcommand{\thepage}{S-\arabic{page}}

\appendix
\setcounter{equation}{0}
\setcounter{figure}{0}
\renewcommand{\thesection}{S}
\renewcommand{\theequation}{S\arabic{equation}}
\renewcommand{\thefigure}{S\arabic{figure}}

\section*{Supplementary Materials}

\subsection{Optimal testing policy}

We next present the proof of the following theorem. Theorem \ref{optimal} in
the main text is obtained as a special case when the transmission rate is
constant.

\begin{theorem}\label{theorem2}
Suppose that the transmission rate $\beta(t)$ is a monotone non-increasing
function of time. The optimal testing policy $\theta^{\dagger}(t)$ for the
optimization problem of Eq.\ \eqref{eq_reduced_problem} with dynamics as in
 Eq.\ \eqref{eq:SIR_optimal} acts in three phases:

\begin{enumerate}
\item While $i_u(t)<i_{\max}$, do not test, i.e. set $\theta^\dagger(t)=0$.

\item After $i_u(t)$ reaches $i_{\max}$, test with time-varying rate $
\theta^\dagger(t)=({\beta(t) s(t)-\gamma})/{\eta}$.

\item Once herd immunity is reached, that is $s(t)=\gamma/\beta(t) $ for the
first time, stop testing, i.e. set $\theta^\dagger(t)=0$.
\end{enumerate}
\end{theorem}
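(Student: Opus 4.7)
The plan is to first verify feasibility of the proposed three-phase profile and then establish optimality. For feasibility I would check each phase separately. Phase 1 is immediate by the definition of when the phase ends. For Phase 2, substituting $\theta^\dagger(t)=(\beta(t)s(t)-\gamma)/\eta$ into the equation for $i_u$ yields $\dot i_u^\dagger=0$, so $i_u^\dagger\equiv i_{\max}$; non-negativity of $\theta^\dagger$ then follows because on Phase 2, $s$ is strictly decreasing (as $i_u^\dagger = i_{\max}>0$) and $\beta$ is non-increasing by hypothesis, so $\beta(t)s(t)$ is strictly decreasing and equals $\gamma$ exactly at the herd-immunity time $t_H^\ast$ that defines the end of the phase. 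For Phase 3 the same monotonicity gives $\beta(t)s(t)\le\gamma$ for every $t\ge t_H^\ast$, so $\dot i_u^\dagger=i_u^\dagger(\beta(t)s(t)-\gamma)\le 0$ with $\theta^\dagger=0$, preserving $i_u^\dagger\le i_{\max}$ for all subsequent times.

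For optimality I would exploit the conservation-type identity obtained by summing the equations for $s$ and $i_u$, namely $\dot s+\dot i_u=-(\gamma+\eta\theta)i_u$. Since any feasible trajectory satisfies $i_u(\infty)=0$, integration yields
\begin{equation*}
\eta\int_0^\infty \theta(\tau)\,i_u(\tau)\,d\tau \;=\; s(0)+i_u(0)-s(\infty)-\gamma\int_0^\infty i_u(\tau)\,d\tau.
\end{equation*}
Because $i_u(\tau)\le i_{\max}$, one has $\int\theta i_u\,d\tau\le i_{\max}\int\theta\,d\tau$ with equality for the proposed profile, since $\theta^\dagger$ is nonzero only on the arc $i_u^\dagger=i_{\max}$. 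Comparing $\int\tilde\theta\,d\tau$ with $\int\theta^\dagger\,d\tau$ then reduces to a comparison of terminal susceptible fractions and time-integrated infected populations, which are tied together by the monotone log-identity $\int_0^\infty\beta(\tau)i_u(\tau)\,d\tau=\ln(s_0/s(\infty))$ obtained from $\dot s/s=-\beta i_u$. Combined with a case split on whether $\tilde s(\infty)$ lies below or above the asymptotic herd-immunity level $\gamma/\beta(\infty)$, this reduction should deliver $\int\tilde\theta\,d\tau\ge\int\theta^\dagger\,d\tau$.

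The step I expect to be the main obstacle is closing the comparison without a clean pointwise state inequality between $\tilde s$ and $s^\dagger$: the naive bound $\int\tilde\theta\tilde i_u\,d\tau\le i_{\max}\int\tilde\theta\,d\tau$ develops slack precisely when $\tilde i_u$ is driven below $i_{\max}$, and direct algebra yields a sign-ambiguous inequality. My primary plan is to retain the slack term $\eta\int\tilde\theta(1-\tilde i_u/i_{\max})\,d\tau$ explicitly and show it dominates any adverse contribution from the terminal-state difference; the complementary over-testing regime in which $\tilde s(\infty)>\gamma/\beta(\infty)$ is handled separately via the logarithmic identity
\begin{equation*}
\eta\int_0^T\tilde\theta\,d\tau \;=\; \int_0^T\beta(\tau)\tilde s(\tau)\,d\tau-\gamma T-\ln\bigl(\tilde i_u(T)/\tilde i_u(0)\bigr),
\end{equation*}
whose right-hand side diverges as $T\to\infty$ because $\tilde s$ stays bounded below by a value strictly exceeding $\gamma/\beta(\infty)$ while $\tilde i_u(T)\to 0$, forcing $\int\tilde\theta\,d\tau=+\infty$ and hence trivially exceeding the optimal cost. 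Should the direct-comparison step prove too delicate for the time-varying case, I would fall back on Pontryagin's necessary conditions: the proposed profile is the unique bang--singular--bang solution with costate $\lambda_u=1/(\eta i_{\max})$ on the singular arc and state-constraint multiplier $\mu=\lambda_s\beta(t)s(t)\ge 0$, and linearity of the Hamiltonian in $\theta$ together with convexity of the admissible set upgrades these necessary conditions to sufficient ones.
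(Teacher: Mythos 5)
Your feasibility check (Phases 1--3, non-negativity of $\theta^\dagger$ on the singular arc via monotonicity of $\beta(t)s(t)$) is correct and matches what the paper needs. The optimality argument, however, has a genuine gap, and it sits exactly where you flag it. Your conservation identity gives the \emph{weighted} cost $\eta\int_0^\infty\theta\, i_u\,d\tau = s(0)+i_u(0)-s(\infty)-\gamma\int_0^\infty i_u\,d\tau$, and passing to the true objective via $\int\theta\, i_u\,d\tau\le i_{\max}\int\theta\,d\tau$ leaves you needing (for constant $\beta$, after using $\int\beta i_u\,d\tau=\ln(s_0/s(\infty))$) the inequality $\psi(\tilde s(\infty))\ge\psi(s^\dagger(\infty))$ with $\psi(x)=-x+(\gamma/\beta)\ln x$, which is increasing on $(0,\gamma/\beta)$ --- so you must still prove $\tilde s(\infty)\ge s^\dagger(\infty)$, i.e., a terminal- (or pointwise-) state domination of the candidate trajectory over every competitor. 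That domination is the actual mathematical content of the theorem and is not supplied; "retain the slack term and show it dominates" is a restatement of the difficulty, not a resolution. Matters are worse for time-varying $\beta$, which the theorem covers: there $\int_0^\infty\beta(\tau)i_u(\tau)\,d\tau=\ln(s_0/s(\infty))$ no longer converts $\gamma\int i_u\,d\tau$ into a function of $s(\infty)$ alone, so the reduction to a one-variable terminal comparison breaks down. The Pontryagin fallback is also not sound as stated: linearity of the Hamiltonian in $\theta$ does not upgrade necessary conditions to sufficient ones here, because the dynamics are bilinear in the state (the $s\,i_u$ terms), so the maximized Hamiltonian is not concave in the state and Mangasarian/Arrow sufficiency does not apply; the pure state constraint $i_u\le i_{\max}$ adds further complications.

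The paper circumvents all of this with a different integration. It first shows that any optimal policy is zero after herd immunity and that $i_u^\dagger(t_{herd})=i_{\max}$, then integrates $\eta\theta=\beta(t)s-\gamma-\frac{d}{dt}\ln i_u$ over $[0,t_{herd}]$ to get the \emph{unweighted} cost $\eta\int_0^{t_{herd}}\theta\,dt=\int_0^{t_{herd}}(\beta(t)s(t)-\gamma)\,dt+\ln(i_u(0)/i_{\max})$, reducing the problem to minimizing $\int_0^{t_{herd}}(\beta(t)s(t)-\gamma)\,dt$. It then proves that $i_u^\dagger$ pointwise dominates any feasible $i_u$ before herd immunity, hence (by $\ln(s(t)/s(0))=-\int_0^t\beta i_u$) $s^\dagger$ is pointwise smallest, hence herd immunity is reached earliest and the non-negative integrand is pointwise smallest. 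This route avoids the weighted objective and the terminal-state comparison entirely, and handles time-varying $\beta$ with no extra work since $\beta$ enters only through $\beta(t)s(t)$. To repair your argument you would need to prove the pointwise (or at least terminal) domination $s^\dagger\le\tilde s$ --- at which point you have essentially reconstructed the paper's proof and your weighted identity becomes unnecessary.
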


To prove such theorem we start with a series of auxiliary lemmas. Let $%
t_{herd}$ be the first instant of time such that $s(t_{herd})=\gamma/%
\beta(t_{herd})$. We say that the system reached herd immunity at time $%
t_{herd}$ since for any $t>t_{herd}$ the fraction of infected naturally
decreases. Mathematically,
\begin{equation*}
\frac{d i_u(t)}{dt} = (\beta(t) s(t) -\gamma)i_u(t) \le (\beta(t_{herd}) s(t_{\text{%
herd}}) -\gamma)i_u(t) =0,
\end{equation*}
where we use the fact that in this model the susceptible and $\beta$ are
monotonically non-increasing. We start by proving that after herd immunity
it is optimal to stop testing.

\begin{lemma}
\label{lemma:after_herd_immunity} Under the optimal testing policy $%
\theta^\dagger$ there exists a finite time $t_{herd}$ at which $%
s(t_{herd})=\gamma/\beta(t_{herd})$. Moreover, $\theta^\dagger(t) = 0$ for
all $t\ge t_{herd}$. Here the superscript $\dagger$ denotes the evolution
under the optimal testing policy.
\end{lemma}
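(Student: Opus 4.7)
The plan is to establish the lemma in two independent steps: the easy step (part 1) assumes a finite $t_{herd}$ exists and shows $\theta^\dagger \equiv 0$ on $[t_{herd}, \infty)$; the harder step (part 2) proves existence of such a finite $t_{herd}$ under the optimal policy.

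For part 1, I would exploit monotonicity. Since $s$ is non-increasing (as $ds/dt = -\beta s i_u \leq 0$) and $\beta$ is non-increasing by hypothesis, for every $t \geq t_{herd}$ one has $\beta(t)\, s(t) \leq \beta(t)\, s(t_{herd}) \leq \beta(t_{herd})\, s(t_{herd}) = \gamma$, so $di_u/dt = (\beta s - \gamma - \eta \theta)\, i_u \leq 0$ for any non-negative testing rate. Consequently, the truncated policy $\tilde\theta(t) := \theta^\dagger(t)\, \mathbf{1}_{\{t < t_{herd}\}}$ is still feasible---trajectories coincide on $[0, t_{herd}]$ and $i_u$ only decays thereafter under $\tilde\theta$, so $i_u(t) \leq i_{\max}$ continues to hold---and its cost is strictly smaller than that of $\theta^\dagger$ unless $\theta^\dagger = 0$ almost everywhere on $[t_{herd}, \infty)$. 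Optimality of $\theta^\dagger$ forces $\theta^\dagger \equiv 0$ past $t_{herd}$.

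For part 2, I plan to argue by contradiction: any feasible policy with $s(t) > \gamma/\beta(t)$ for all finite $t$ must satisfy $\int_0^\infty \theta^\dagger\, dt = \infty$, which is incompatible with optimality because the explicit three-phase policy stated in the theorem is feasible with finite cost (in Phase 2 the dynamics $ds/dt = -\beta s i_{\max}$ drive $s$ to $\gamma/\beta$ in finite time, and the testing rate $(\beta s - \gamma)/\eta$ is bounded throughout). Integrating $di_u/dt$ yields the identity
\[
\int_0^T (\beta(t) s(t) - \gamma)\, dt \;-\; \eta \int_0^T \theta^\dagger(t)\, dt \;=\; \ln\!\frac{i_u(T)}{i_u(0)} \;\leq\; \ln\!\frac{i_{\max}}{i_u(0)},
\]
in which both integrals on the left are non-decreasing in $T$ (by $\beta s - \gamma > 0$ under the no-herd-immunity hypothesis and by $\theta^\dagger \geq 0$). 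Under the assumption $\int_0^\infty \theta^\dagger\, dt < \infty$, the upper bound from $i_u \leq i_{\max}$ also forces $\int_0^\infty (\beta s - \gamma)\, dt$ to be finite; both integrals then converge, and so does $\ln(i_u(T)/i_u(0))$ to some finite limit $L$, so that $i_u(T) \to i_u(0) e^L > 0$. But integrating $ds/dt = -\beta s i_u$ and using $\beta s \geq \gamma$ gives
\[
s(0) - s_\infty \;=\; \int_0^\infty \beta(t)\, s(t)\, i_u(t)\, dt \;\geq\; \gamma \int_0^\infty i_u(t)\, dt,
\]
whose right-hand side diverges once $i_u$ is eventually bounded below by a positive constant, contradicting $s(0) - s_\infty \leq 1$.

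The hard part will be part 2, specifically ruling out all qualitatively distinct ways of asymptotically approaching herd immunity without ever attaining it (e.g.\ $s$ asymptoting to $\gamma/\beta_\infty$ from above, or $i_u$ decaying slowly to $0$). The integral identity above is the decisive device because it closes off both scenarios in a single stroke: if $i_u \to 0$ then the right-hand side goes to $-\infty$ and only an unbounded $\eta\int \theta^\dagger$ can absorb the divergence, whereas if $i_u$ stays bounded away from $0$ the mass-balance integral coming from $-ds/dt$ is violated.
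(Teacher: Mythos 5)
Your proof is correct, and its two halves line up asymmetrically with the paper's argument. Part 1 is essentially identical to the paper's: both use that $\beta(t)s(t)$ is non-increasing (product of two non-increasing positive functions), so $\beta s \le \gamma$ persists after $t_{herd}$, $i_u$ decays without intervention, and the truncated policy is feasible, forcing $\theta^\dagger \equiv 0$ past $t_{herd}$ by optimality. Part 2 is where you genuinely add something: the paper dismisses existence of a finite $t_{herd}$ as "immediate as if that was not the case the objective function would be infinite," with no supporting computation. Your integral identity $\ln(i_u(T)/i_u(0)) = \int_0^T(\beta s-\gamma)\,dt - \eta\int_0^T\theta^\dagger\,dt$, combined with the mass-balance bound $\int_0^\infty \beta s\, i_u\,dt = s(0)-s_\infty \le 1$, is a complete and correct way to substantiate that claim: it simultaneously rules out the case where $i_u$ stays bounded away from zero (violating mass balance when $\beta s \ge \gamma$) and the case where $i_u \to 0$ (forcing $\eta\int\theta^\dagger$ to diverge). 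The one piece you should make explicit rather than parenthetical is the feasibility and finite cost of some comparison policy (your three-phase candidate) for \emph{time-varying} non-increasing $\beta$; the argument is that during Phase 2, $ds/dt = -\beta s\, i_{\max} \le -\gamma i_{\max}$ as long as $\beta s > \gamma$, so $\beta s$ must cross $\gamma$ in finite time with a testing rate bounded by $\beta(0)s(0)/\eta$. With that noted, your write-up is strictly more rigorous than the paper's on the existence half while coinciding with it on the optimality-after-herd-immunity half.
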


\begin{proof}
The fact that under the optimal testing policy herd immunity is reached is immediate as if that was not the case the objective function would be infinite. Let $t_{herd}$ be the time when herd immunity is reached and recall that $\beta(t_{herd}) s(t_{herd}) - \gamma \leq 0$ for all $t \geq t_{herd}$ since $\beta(t)$ and $s(t)$ are both decreasing with time. Consequently, $d i_u(t)/dt\le 0$ for all $t \geq t_{herd}$  and the control policy that sets $\theta^\dagger(t) = 0 \ \forall \ t \geq t_{herd}$ is feasible and therefore optimal (as no other control policy can achieve zero cost).
\end{proof}
We next show that, under the optimal control, once herd immunity is reached
the number of infected undetected must be at the threshold.

\begin{lemma}
\label{lemma:herd_imm_imax} If the optimal objective is strictly positive, $i^\dagger_u(t_{herd}) =
i_{\max}$.
\end{lemma}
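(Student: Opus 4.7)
I will argue by contradiction: assume the optimal cost is strictly positive yet $i_u^\dagger(t_{herd}) < i_{\max}$, and construct a feasible control with strictly smaller cost. The idea is to locate a short interval immediately before the last testing time where the constraint is slack, shave a little off $\theta^\dagger$ there, and verify the resulting trajectory never hits $i_{\max}$.

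First, define $\tau^*$ as the essential supremum of the support of $\theta^\dagger$ --- heuristically, the last effective testing time. Positivity of the cost forces $\tau^* > 0$, and Lemma~\ref{lemma:after_herd_immunity} gives $\tau^* \leq t_{herd}$. I next claim $i_u^\dagger(\tau^*) < i_{\max}$ strictly. The case $\tau^* = t_{herd}$ is immediate from the contradiction hypothesis. If $\tau^* < t_{herd}$, I observe that under a monotone non-increasing $\beta$,
\[
\frac{d}{dt}\bigl(\beta(t) s^\dagger(t)\bigr) = \dot\beta(t) s^\dagger(t) - \beta(t)^2 s^\dagger(t) i_u^\dagger(t) \leq 0,
\]
so $\beta(\tau^*) s^\dagger(\tau^*) > \beta(t_{herd}) s^\dagger(t_{herd}) = \gamma$. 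Hence if $i_u^\dagger(\tau^*)$ equalled $i_{\max}$, then since $\theta^\dagger \equiv 0$ on $(\tau^*, t_{herd}]$ we would have $di_u^\dagger/dt > 0$ there, pushing $i_u^\dagger$ above $i_{\max}$ and violating feasibility.

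By continuity of $i_u^\dagger$ I then pick $\delta > 0$ and $\epsilon_0 > 0$ so that $i_u^\dagger(t) \leq i_{\max} - \epsilon_0$ on $[\tau^*-\delta,\tau^*]$, while $\int_{\tau^*-\delta}^{\tau^*} \theta^\dagger(t)\,dt > 0$ by the definition of $\tau^*$. I define, for $\lambda \in (0,1)$, the perturbed control
\[
\tilde\theta_\lambda(t) = \begin{cases} (1-\lambda)\theta^\dagger(t) & t \in [\tau^*-\delta, \tau^*], \\ \theta^\dagger(t) & \text{otherwise}, \end{cases}
\]
whose cost is strictly less than that of $\theta^\dagger$ for every $\lambda > 0$.

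The main technical step is to verify that the corresponding trajectory $\tilde i_u^\lambda$ still satisfies $\tilde i_u^\lambda(t) \leq i_{\max}$ for all $t \geq 0$, once $\lambda$ is small enough. Standard continuous dependence of the ODE~\eqref{eq:SIR_optimal} on its control gives $\|\tilde i_u^\lambda - i_u^\dagger\|_\infty \to 0$ as $\lambda \to 0$ on every bounded time window. On $[\tau^*-\delta,\tau^*]$ the slack $\epsilon_0$ absorbs this $O(\lambda)$ deviation. On $[\tau^*,\infty)$ the two controls agree (both equal zero), so the trajectories differ only through their values at $\tau^*$, which differ by $O(\lambda)$; moreover, since $\beta s^\dagger$ is non-increasing and equals $\gamma$ at $t_{herd}$, the function $i_u^\dagger$ on $[\tau^*,\infty)$ attains its maximum at $t_{herd}$ with value $i_u^\dagger(t_{herd}) < i_{\max}$ and decays to $0$ afterwards. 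This uniform slack is preserved under an $O(\lambda)$ initial-condition perturbation for $\lambda$ small, so $\tilde\theta_\lambda$ is feasible with strictly smaller cost, contradicting optimality of $\theta^\dagger$. The hard part is making this global-in-time feasibility check rigorous: I would combine Gr\"onwall-type bounds on a long finite horizon $[\tau^*, T]$ with the exponential decay of $i_u^\dagger$ past $t_{herd}$, or equivalently work in the $(s, i_u)$ phase plane, where strict monotonicity of $s$ under the $(1-\lambda)\theta^\dagger$ perturbation lets me compare trajectories directly.
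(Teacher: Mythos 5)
Your proposal is correct and follows essentially the same route as the paper's proof: argue by contradiction, shave a small amount off the optimal control on a short interval ending at the last effective testing time (where you show the constraint is strictly slack), and check that feasibility survives the perturbation while the cost strictly drops. You carry out more carefully the steps the paper leaves informal (the strict slack at $\tau^*$ via monotonicity of $\beta(t)s(t)$, and the global-in-time feasibility via the preserved slack $i_{\max}-i_u^\dagger(t_{herd})>0$), whereas the paper closes the tail argument by noting that the perturbed trajectory reaches herd immunity earlier; both variants are sound.
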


\begin{proof}
First, note that if the optimal objective is $0$, this means that we do not apply any control to the system. The optimal trajectory $i_u^\dagger(t)$ is therefore at most tangent to $i_{\max}$, because otherwise, we would need to apply some positive control to make sure that it does not violate the constraint $i_u \leq i_{\max}$. Instead, if the optimal objective is strictly positive there is an interval of time before herd immunity is reached with positive control. 
If $i^\dagger_u(t_{herd}) < i_{\max}$, we could decrease the control by a small amount before reaching herd immunity (the last time the control was positive before reaching herd immunity). This would imply that $i_u$ increase faster but for small deviations of the control  we could still guarantee $i_u(t)\le i_{\max}$ for all times. Since the rate of decrease of $s$ increases as $i_u$ increases, this would imply that $s$ will decrease faster and therefore herd immunity will be reached at a time $t_{herd}' < t_{herd}$. From Lemma \ref{lemma:after_herd_immunity}, we have that the optimal control after reaching herd immunity is identically 0. Therefore, the new control policy strictly reduces the objective,  violating the optimality of the control.
\end{proof}

Finally, we characterize the optimality of the proposed control $%
\theta^\dagger$ before $t_{herd}$.

\begin{lemma}
The optimal control takes the {most rapid approach path} to reach
herd immunity.
\end{lemma}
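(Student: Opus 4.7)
The plan is to reduce the problem to a pointwise optimization on the $(s,i_u)$ phase plane by exploiting a cost identity derived from the logarithmic ODE satisfied by $i_u$.

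First, I would integrate $d\ln i_u/dt = \beta(t)s(t) - \gamma - \eta\theta(t)$ from $0$ to $t_{herd}$. Combining with Lemmas~\ref{lemma:after_herd_immunity}--\ref{lemma:herd_imm_imax} (which give $\theta^\dagger \equiv 0$ for $t\ge t_{herd}$ and $i_u^\dagger(t_{herd})=i_{\max}$), this produces the cost identity
\begin{equation*}
\eta\int_0^{t_{herd}} \theta(t)\,dt \;=\; \int_0^{t_{herd}}\bigl(\beta(t)\,s(t)-\gamma\bigr)\,dt \;+\; \ln\bigl(i_u(0)/i_{\max}\bigr).
\end{equation*}
The logarithmic term is a fixed constant, so the problem reduces to minimizing $J := \int_0^{t_{herd}}(\beta(t)s(t)-\gamma)\,dt$ among admissible trajectories with $i_u(t_{herd})=i_{\max}$ and $s(t_{herd})=\gamma/\beta(t_{herd})$.

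Second, I would reparametrize time by the strictly decreasing variable $s$ (well-defined since $i_u>0$ on $[0,t_{herd}]$). Using $dt = -ds/(\beta(t)\,s\,i_u)$ yields, in the constant-$\beta$ case,
\begin{equation*}
J \;=\; \int_{\gamma/\beta}^{s(0)} \frac{\beta s-\gamma}{\beta s\,i_u(s)}\,ds,
\end{equation*}
where the integration limits are the \emph{same} for every admissible trajectory. The non-negative integrand is strictly decreasing in $i_u(s)$, so minimizing $J$ amounts to maximizing $i_u(s)$ pointwise in $s$. Admissibility $\theta\ge 0$ translates into $di_u/ds = -1 + (\gamma+\eta\theta)/(\beta s) \ge -1 + \gamma/(\beta s)$, which, integrated from the common initial condition, forces $i_u(s)\le i_u^{(0)}(s)$ for every admissible trajectory, where $i_u^{(0)}$ is the trajectory obtained with $\theta\equiv 0$. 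Combined with the hard constraint $i_u(s)\le i_{\max}$, the pointwise upper envelope is $\min\{i_{\max},i_u^{(0)}(s)\}$, and this envelope is attained exactly by the proposed policy: phase~1 with $\theta^\dagger=0$ traces $i_u^{(0)}(s)$ until it first meets $i_{\max}$ at some $s_1$, and phase~2 with $\theta^\dagger=(\beta s-\gamma)/\eta$ pins $i_u$ at $i_{\max}$ on $[\gamma/\beta,s_1]$. This establishes the most-rapid-approach-path characterization when $\beta$ is constant.

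The main obstacle is extending the argument to the monotone non-increasing, time-varying $\beta(t)$ required by Theorem~\ref{theorem2}: then $\beta(t(s))$ is itself a functional of the control, so the pointwise comparison at fixed $s$ no longer decouples across competing trajectories. My plan is to first show, via a comparison principle on $ds/dt = -\beta(t) s\, i_u$ together with the bound $i_u\le \min\{i_{\max}, i_u^{(0)}(s)\}$, that the proposed trajectory reaches every susceptible level $s$ at an earlier time than any feasible competitor; by monotonicity of $\beta$, the proposed trajectory consequently sees a (weakly) larger $\beta$ at each $s$. Reconciling this with the larger $i_u(s)$ along the same trajectory --- the two effects push the integrand in opposite directions --- can be done either through a time-rescaling coupling of the two trajectories or, more robustly, by invoking Pontryagin's minimum principle directly: the adjoint on the singular arc takes the explicit form $\lambda_{i_u} = 1/(\eta\,i_u)$, and complementary slackness for the state constraint $i_u\le i_{\max}$ supplies the multiplier determining the switching structure. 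Rigorously closing this comparison (or, equivalently, verifying the Pontryagin optimality conditions along the boundary arc of the state constraint) is the step I expect to be the most technical.
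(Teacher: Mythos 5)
Your reduction to minimizing $\int_0^{t_{herd}}(\beta s-\gamma)\,dt$ via the logarithmic cost identity is exactly the paper's first step. Where you diverge is in how that reduced problem is solved. The paper stays in the time domain: it argues that the trajectory $i_u^\dagger$ pointwise dominates any feasible $i_u$, deduces from $\ln(s(t)/s(0))=-\int_0^t\beta(\tau)i_u(\tau)\,d\tau$ that $s^\dagger(t)\le s(t)$ at every time $t$, and concludes that the nonnegative integrand $\beta(t)s(t)-\gamma$ is pointwise smaller and is integrated over a shorter horizon. Your route --- reparametrizing by $s$, observing that the limits $[\gamma/\beta,\,s(0)]$ are common to all admissible trajectories, and maximizing $i_u(s)$ pointwise against the envelope $\min\{i_{\max},\,i_u^{(0)}(s)\}$ --- is a genuinely different and, for constant $\beta$, complete argument; indeed it supplies an explicit proof (via $di_u/ds\ge -1+\gamma/(\beta s)$ under $\theta\ge 0$) of the pointwise-domination claim that the paper only asserts.

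The gap is the one you name yourself: the lemma sits inside the proof of Theorem \ref{theorem2}, which allows monotone non-increasing $\beta(t)$, and your $s$-parametrization breaks there because competing trajectories visit a given $s$ at different times and therefore see different values of $\beta$. The paper's time-domain comparison sidesteps this entirely: comparing $\beta(t)s^\dagger(t)-\gamma$ with $\beta(t)s(t)-\gamma$ at the \emph{same} $t$ makes the factor $\beta(t)$ common to both trajectories, so only the ordering $s^\dagger(t)\le s(t)$ is needed; monotonicity of $\beta$ enters only to guarantee that herd immunity ($\beta(t)s(t)=\gamma$) is absorbing, so that the integrand is nonnegative on $[0,t_{herd}]$ and truncating the integral at the earlier $t_{herd}^\dagger$ can only decrease it. Neither the time-rescaling coupling nor the Pontryagin machinery you propose is needed. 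A small further caveat, shared with the paper: the log term is really $\ln\left(i_u(0)/i_u(t_{herd})\right)$, which is a common constant only across trajectories ending at $i_{\max}$; for a competitor with $i_u(t_{herd})<i_{\max}$ that term is larger, so the inequality still goes the right way, but this should be said explicitly.
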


\begin{proof}

Lemma \ref{lemma:after_herd_immunity}  characterizes how the optimal control behaves after herd immunity. Therefore, we can rewrite  the original optimization problem in Eq.  \eqref{eq_reduced_problem} as follows
\begin{align}\label{eq5}
\min & \int_{t=0}^{t_{herd}} \theta(t) dt \nonumber \\
\text{s.t. } \quad  & i_u(t) \leq i_{\max} \quad \forall t  \nonumber \\
& \theta(t) \geq 0 \quad \forall t
\end{align}
where $t_{herd}$ is the time to reach herd immunity (which depends on $\theta$). Integrating the  dynamics 
\begin{align*}
\eta \theta(t) = \beta(t) s(t) - \gamma -\frac{1}{i_u(t)} \frac{d{i_u(t)}}{dt}
\end{align*}
yields
\begin{align*}
\eta  \int_{t=0}^{t_{herd}} \theta(t) dt &=  \int_{t=0}^{t_{herd}} \left( \beta s(t) - \gamma  -\frac{1}{i_u(t)} \frac{di_u(t)}{dt} \right) dt \nonumber \\
 &=  \int_{t=0}^{t_{herd}} \left( \beta s(t) - \gamma  \right) dt - \int_{t=0}^{t_{herd}} \left( \frac{1}{i_u(t)} \frac{di_u(t)}{dt} \right) dt \nonumber \\
 &=  \int_{t=0}^{t_{herd}} \left( \beta s(t) - \gamma \right) dt + \log \left( \frac{i_u(0)}{i_u({t_{herd})}} \right) \nonumber \\
 &=  \int_{t=0}^{t_{herd}} \left( \beta s(t) - \gamma \right) dt + \log \left( \frac{i_u(0)}{i_{\max}} \right),
\end{align*}
where we have used Lemma \ref{lemma:herd_imm_imax} for the last equality. Note that the second term is a constant and $\eta$ is a positive constant. Therefore, the original optimal control problem can be rewritten as:
\begin{align}\label{eq55}
\min & \int_{t=0}^{t_{herd}} \left( \beta s(t) - \gamma  \right) dt \nonumber \\
\text{s.t. } \quad  & i_u(t) \leq i_{\max} \quad \forall t  \nonumber \\
& \theta(t) \geq 0 \quad \forall t.
\end{align}

We next show that the  trajectory $i_u^\dagger$ induced by the control $\theta^\dagger$ pointwise dominates any other feasible trajectory  $i_u$ and therefore the corresponding trajectory  $s^\dagger$ is pointwise smaller than any other feasible trajectory of $s$, before hitting herd immunity  (Fig. \ref{fig:Proof_Image}). This has two consequences: i) herd immunity is reached sooner under $\theta^\dagger$ and ii) $ \left( \beta(t) s(t) - \gamma  \right)$ is pointwise smaller. These two points prove that   $\theta^\dagger$ minimizes Eq. \eqref{eq55} and thus also Eq. \eqref{eq5}.

\noindent To prove points i) and ii) recall that from the first line of Eq.\ \eqref{eq:SIR_optimal}
\begin{align*}
\frac{ds}{dt} = -\beta s i_u \qquad \implies
\frac{1}{s}\frac{ds}{dt} = -\beta i_u \qquad \implies \frac{d\log(s)}{dt} = -\beta i_u
\end{align*}
and by integration
\begin{align*}
\log \left( \frac{s(t)}{s(0)}\right) = - \int_0^t \beta(\tau) i_u(\tau) d\tau
\end{align*}
which shows that for two feasible trajectories $(s^\dagger(t), i^\dagger_u(t))$ and $(s(t),i_u(t))$, if $i_u^\dagger(t) \geq i(t)$ for all $t \leq T$, we have that $s^\dagger(t) \leq s(t)$ for all $t \leq T$.
\end{proof}

\begin{figure}[H]
\centering
\includegraphics{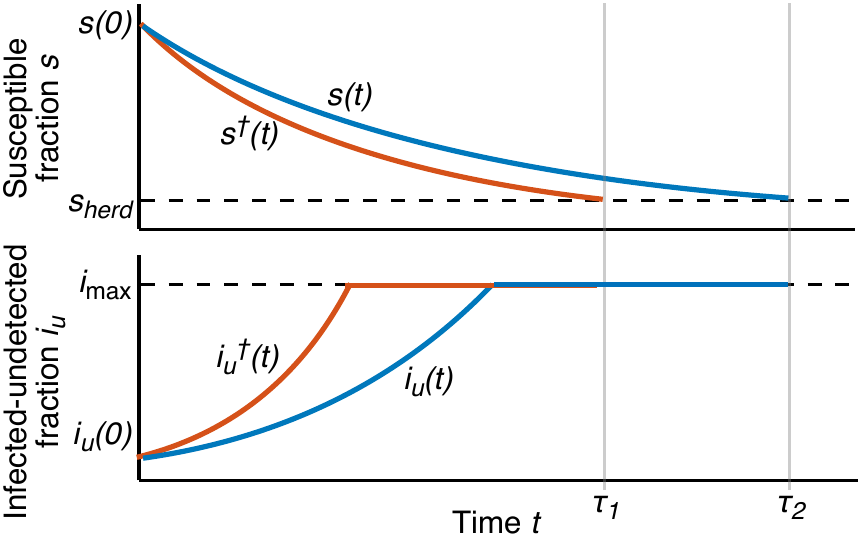}
\caption{The trajectory $i_u^\dagger(t)$ induced by the optimal testing strategy $\theta^\dagger$ pointwise dominates any other feasible trajectory $i_u$. Conversely, the trajectory $s^\dagger(t)$ induced by the optimal testing strategy $\theta^\dagger$ is pointwise smaller than any other feasible trajectory $s(t)$, until herd immunity is reached. In this illustration $\beta$ is constant, so that $s(t_{herd})=\gamma/\beta=s_{herd}$.}
\label{fig:Proof_Image}
\end{figure}

Combining the lemmas above proves Theorem \ref{theorem2}. The expression of $%
\theta^\dagger$ when $i_u(t)=i_{\max}$ can be obtained from ${d i_u}/{dt}=(\beta
s-\gamma -\eta \theta^\dagger) i_u=0$.

\subsection{Proofs of Observability}

\subsubsection{Proof of Lemma \protect\ref{lem:obs1}}

We prove the two statements separately.
\begin{enumerate}
\item From the third line of Eq.\ \eqref{eq:SIR_optimal}, it holds 
\begin{equation*}
i_u(t)=\frac 1{\eta\theta(t)} \left( \frac{di_d(t)}{dt}+\gamma i_d(t) \right)
\end{equation*}
hence $i_u$ (and all its derivatives) can be reconstructed from the observed
output $i_d$ (and its derivative). From the second line of Eq.\ \eqref{eq:SIR_optimal}, for $\beta$ constant, we obtain 
\begin{equation}  \label{eq:step}
\beta s(t) =\frac{1}{i_u(t)}\frac{di_u(t)}{dt} + \gamma +\eta \theta(t).
\end{equation}
Substituting the expression in Eq. \eqref{eq:step} on the right hand side of 
\begin{equation*}
\beta = -\frac{1}{\beta s(t) i^2_u(t)} \left( \frac{d^2i_u(t)}{dt^2}-\beta s(t)\frac{di_u(t)}{dt} +\gamma \frac{di_u(t)}{dt} +
\eta \frac{d\theta(t)}{dt}i_u(t) +\eta\theta(t) \frac{di_u(t)}{dt} \right)
\end{equation*}
(obtained by computing the second derivative of $i_d$) yields a formula for $%
\beta$ as a function of known quantities ($i_u$, $i_d$, $\theta$ and their
derivatives). Since $\beta s(t)$ is known from Eq. \eqref{eq:step}, this implies
that $s(t)$ is known and finally $r(t)=1-s(t)-i_u(t)-i_d(t)$.

\item If $\beta(t)$ is time-varying the system is not observable. To show
this we consider two evolutions that start from different initial conditions
and have different $\beta$ evolutions, yet lead to the same observable
output $i_d$. This proves that just observing the output it is not possible
to distinguish the two scenarios. Specifically consider the two systems:

\begin{align*}
&\left\{
\begin{aligned} \frac{ds}{dt} &= -\beta s i_u \\
\frac{di_u}{dt} &= \beta s i_u - \gamma
i_u -\eta\theta(t) i_u \\
\frac{di_d}{dt} &=\eta\theta(t) i_u - \gamma i_d
\end{aligned}\right. & \left\{\begin{aligned} \frac{d\bar s}{dt} &= -\bar \beta \bar
s\bar i_u \\ \frac{d\bar i_u}{dt} &= \bar \beta \bar s \bar i_u -
\gamma\bar i_u -\eta\theta(t) \bar i_u \\
\frac{d\bar i_d}{dt} &=
\eta\theta(t) \bar i_u - \gamma \bar i_d \end{aligned}\right.
\end{align*}
with initial state $s(0)\neq \bar s(0),$ $i_u(0)= \bar i_u(0)$, $i_d(0)=
\bar i_d(0)=r(0)= \bar r(0)=0$ and suppose that $\bar
\beta(t)=\beta(t) {s(t)}/{\bar s(t)}$. Then 
\begin{align*}
\frac{d i_u}{dt} &= \beta s i_u - \gamma i_u -\eta\theta(t) i_u , \\
\frac{d\bar i_u}{dt} &= \beta s \bar i_u - \gamma\bar i_u -\eta\theta(t)
\bar i_u ,\quad \bar i_u(0)= i_u(0)
\end{align*}
Since $i_u$ and $\bar i_u$ solve the same differential equation it must be 
$\bar i_u(t)\equiv i_u(t)$ for all $t$. This immediately implies $%
i_d(t)\equiv \bar i_d(t)$, yet the evolution of $s$ and $\bar s$ is
different as they start from different initial conditions.
\end{enumerate}

\subsubsection{Proof of Lemma \protect\ref{lem:obs2}}

Since $y(t)$ is observed continuously in time one can use it to compute
exactly $di_d/dt$ and $d r_d/dt$. The third and fifth equations in %
Eq. \eqref{eq:SIR_estimation} can then be used to recover $i_u(t)$ and $r(t)$
exactly as follows: 
\begin{align*}
i_u(t) &= \frac1{\eta\theta(t) + \theta_B \eta_{BI}} \left(\frac{di_d}{dt} + \gamma i_d\right),
\notag \\
r_u(t) &= \frac 1{\theta_B \eta_{BR}} \left(\frac{dr_d}{dt} - \gamma i_d\right).
\end{align*}
Using the fact that 
\begin{align*}
s(t) = 1 - i_u(t) - r_u(t) - i_d(t) - r_d(t)
\end{align*}
one can reconstruct $s(t)$ as well. Overall, the state can be estimated
exactly from the observed variables. As a byproduct,  knowledge of $s(t)$
and $i_u(t)$ allows the identification of $\beta(t)$ from the first equation
in Eq. \eqref{eq:SIR_estimation}.

\subsection{Optimal testing policy for the extended problem in Eq.\ \protect
\ref{eq_full_problem}}

We consider an extension of the model in Eq.\ \eqref{eq:SIR_estimation} that
accounts for detection of symptomatic individuals by considering an additional flow from infected-undetected to infected-detected with rate $\kappa$: 
\begin{equation}  \label{eq:SIR_estimationE}
\begin{aligned} \frac{ds(t)}{dt} &= -\beta(t) s(t) i_u(t) \\
 \frac{di_u(t)}{dt}&=
\beta(t) s(t) i_u(t) - \gamma i_u(t) -\eta \theta(t) i_u(t) -\kappa i_u(t) -
\theta_B \eta_{BI} i_u(t)\\
 \frac{di_d(t)}{dt} &= \eta \theta (t) i_u(t) + \kappa
i_u(t)+\theta_B \eta_{BI} i_u(t) - \gamma i_d(t)\\
 \frac{dr_u(t)}{dt}&= \gamma
i_u(t) - \theta_B \eta_{BR} r_u(t)\\
 \frac{dr_d(t)}{dt} &= \gamma i_d(t) +
\theta_B \eta_{BR} r_u(t) \end{aligned}
\end{equation}

The numerical solution of the problem in Eq.\
\eqref{eq_full_problem} for this extended model (and for the parameters of Fig. \ref{fig:Optimal_SIR2}) has the structure described in  Section~\ref{sec:ext} and schematized in Fig. \ref{fig:testing_structure}. To generalize this analysis to any set of parameters, we here aim at deriving an analytic characterization of the optimal testing policy for Problem \ref{eq_full_problem} within the class of policies with such a structure. More in detail, we aim at deriving analytic expressions for the optimal switching times  $t_A$, $t_B$, $t_C$ and $t_D$ and for the testing rate in the interval $[t_B,t_C]$ as a function of the model parameters and initial conditions.
With slight abuse of notation, we  denote the optimal policy within this class with the  symbol $\theta^*$. 
For this analysis we assume $\beta$ to be constant,
and without loss of generality we assume $\eta=1$. Moreover, we work under
the assumption that the state is known, hence we set $\theta_B=0$.

\begin{figure}[]
\centering
\includegraphics{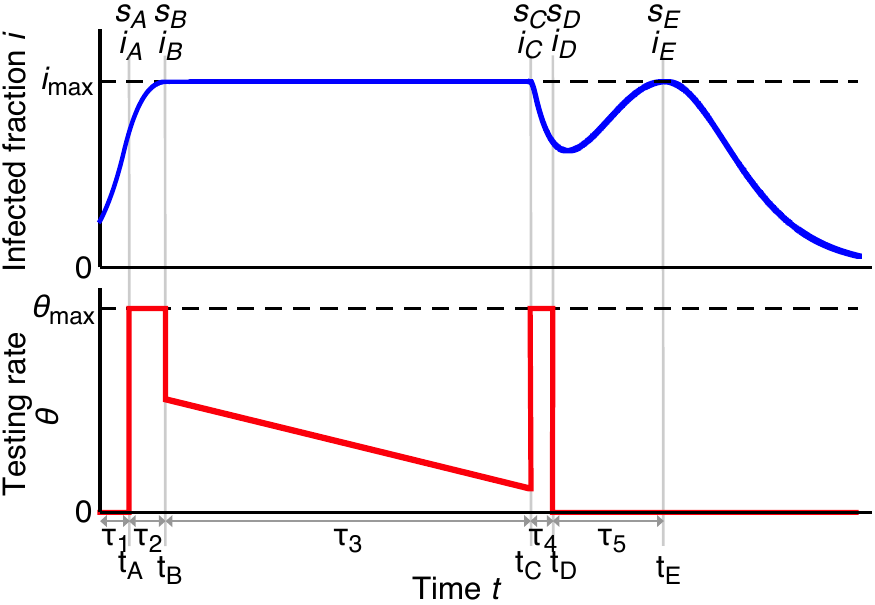}
\caption{Structure of the optimal testing strategy (red curve) and corresponding
effect on the total fraction of infected individuals (blue curve). The optimal
testing policy $\protect\hat{\theta}^*$ is zero until time $t_A$, when it
switches to the maximum testing rate $\protect\theta_{\max}$ until time $t_B$%
, when $i=i_u+i_d$ reaches the constraint $i_{\max}$ with zero derivative.
In the interval $[t_B,t_C]$ of duration $\protect\tau_3$, the optimal
testing policy maintains $i=i_{\max}$, until switching back to $\protect%
\theta_{\max}$ between times $t_C$ and $t_D$. After $t_D$, $i$ increases
until reaching $i_{\max}$ with zero derivative at time $t_E$, and decreases
afterwards.}
\label{fig:testing_structure}
\end{figure}
The following analytic relationships (adapted from \cite{harko2014}) between 
$s$ and $i_u$ at two times $t_1<t_2$ under constant testing rate $\theta$
will be useful: 
\begin{equation}  \label{f}
f_\theta(s(t_1),i_u(t_1),s(t_2),i_u(t_2)) := \ln\left(\frac{s(t_2)}{s(t_1)}%
\right)-\frac{\beta}{\gamma+\theta+\kappa}%
\left(s(t_2)+i_u(t_2)-s(t_1)-i_u(t_1)\right)=0,
\end{equation}
\begin{equation}\begin{aligned}  \label{g}
t_2 - t_1 &= \int_{e^{- \frac{\beta}{\gamma+\theta+\kappa}
(1-s(t_1)-i_u(t_1))}}^{\frac {s(t_2)}{s(t_1)} e^{ -\frac{\beta}{%
\gamma+\theta+\kappa} (1-s(t_1)-i_u(t_1))}} \frac{1}{x} \frac{1}{%
-\beta-(\gamma+\theta+\kappa) \ln x + \beta s(t_1) x e^{ \frac{\beta}{%
\gamma+\theta+\kappa}(1-s(t_1)-i_u(t_1))}}dx \\
&:=\Delta t_{\theta}(s(t_2),s(t_1),i_u(t_1)).
\end{aligned}\end{equation}

\subsubsection{Intervals $[0,t_A]$ and $[t_A,t_B]$}

\label{sec:intervals0AB} First, we evaluate $s_A:= s(t_A)$, $i_{uA}:=
i_u(t_A)$, $s_B:= s(t_B)$ and $i_{uB}:= i_u(t_B)$ as functions of $s_0$ and $%
i_0$. Using Eq.\ \eqref{f} we have: 
\begin{align}
f_0(s_0,i_0,s_A,i_{uA})=0 ,  \label{A1} \\
f_{\theta_{\max}}(s_A,i_{uA},s_B,i_{uB})=0 .  \label{A2}
\end{align}
Imposing that at time $t_B$ one has $(i_u+i_d)\mid_{t_B}=i_{\max}$ and ${%
d(i_u+i_d)/dt}\mid_{t_B}=0$ yields: 
\begin{equation}
\left. \frac{d(i_u+i_d)}{dt}\right|_{t_B}=\beta s_Bi_{uB}-\gamma (i_{uB}+i_d(t_B))=\beta
s_Bi_{uB}-\gamma i_{\max}=0 ,  \label{A3}
\end{equation}
and integrating $d i_d /dt= (\theta_{\max}+\kappa) i_u -\gamma i_d$ in $%
[t_A,t_B]$ one finds: 
\begin{equation}
\begin{aligned} i_{dB}&=i_{\max}-i_{uB}= \left(
i_{dA}+(\theta_{\max}+\kappa) \int_{t_A}^{t_B} e^{\gamma (t-t_A)} i_u(t) dt
\right) e^{-\gamma (t_B-t_A)}\\ &= \left( i_{dA}+ (\theta_{\max}+\kappa)
\int_{s_A}^{s_B} e^{\gamma \Delta t_{\theta_{\max}}(s,s_A,i_{uA})}i_u(s)
\frac{dt}{ds}ds \right) e^{-\gamma \tau_2} \\ &= \left(
i_{dA}+(\theta_{\max}+\kappa) \int_{s_B}^{s_A} \frac{ e^{\gamma \Delta
t_{\theta_{\max}}(s,s_A,i_{uA})} }{\beta s} ds \label{A4}\right) e^{-\gamma
\tau_2} , \end{aligned}
\end{equation}
where $\tau_2$ can be computed from Eq.\ \ref{g} as a function of $s_A$, $%
i_{uA}$ and $s_B$. The quantity $i_{dA}$ can be computed as a function of $%
s_A$ by integrating $d i_{d}/dt=-\gamma i_d+\kappa i_u$ in the interval $%
[0,t_A]$, which gives: 
\begin{equation*}
i_{dA}=e^{-\gamma t_A}\int_{s_A}^{s_0} e^{\gamma \Delta t_{0}(s,s_0,0)} 
\frac{\kappa}{\beta s}ds,
\end{equation*}
where again we made use of Eq.\ \ref{g}. Eqs. \eqref{A1}-\eqref{A4} are four
equations in four unknowns that can be solved to find $s_A,i_{uA},s_B,i_{uB}$
as a function of $s_0$ and $i_0$. In the interval $[t_A,t_B]$, the optimal
testing policy is $\theta^{*}(t)=\theta_{\max}$, and thus the cost of the
control in the interval $[0,t_B]$ is: 
\begin{equation*}
\int_0^{t_B} \theta^{*}(t)dt = \tau_2 \theta_{\max}.
\end{equation*}

\subsubsection{Interval $[t_B,t_C]$}

In the interval $[t_B,t_C]$ the optimal testing policy maintains $%
i=i_u+i_d\equiv i_{\max}$. Hence, the first and second derivatives of $i$
must be equal to zero. Note that 
\begin{align*}
\frac{d(i_u+i_d)}{dt}&= \beta s i_u -\gamma (i_u+i_d)=\beta s i_u -\gamma
i_{\max}=0 \\
\frac{d^2(i_u+i_d)}{dt^2}&= \beta \frac{ds}{dt} i_u +\beta s\frac {di_u}{dt} = \beta(-\beta s
i_u) i_u +\beta s(\beta s i_u - \gamma i_u -\theta^{*} i_u - \kappa i_u) \\
&= -\beta^2 s i_u^2 + \beta^2 s^2 i_u -\beta s\gamma i_u -\beta s \theta^{*}
i_u-\beta \kappa s i_u=0
\end{align*}
which yields 
\begin{equation}  \label{optimalBC}
\theta^{*}=\beta(s- i_u) - \gamma-\kappa.
\end{equation}
Substituting this expression in the dynamic equations, we obtain that in the
interval $[t_B,t_C]$ the undetected fraction of the infected individuals
satisfies: 
\begin{equation}  \label{ieq}
\frac{di_u}{dt}=(\beta s -\gamma -\theta^{*} -\kappa )i_u=\beta i_u^2
\end{equation}
and thus 
\begin{equation}  \label{eq:iusol}
i_u(t)=\frac{1}{\displaystyle \frac{1}{i_{uB}}-\beta (t-t_B)}.
\end{equation}
Moreover, 
\begin{equation}\begin{aligned}
\frac{di_d}{dt}&= \theta^{*} i_u +\kappa i_u -\gamma i_d=\beta s i_u -\gamma i_u
-\beta i_u^2-\gamma i_d =-\frac{ds}{dt} -\gamma i_u -\frac{di_u}{dt}-\gamma i_d \\
&\Rightarrow 0= \frac{di_u}{dt}+ \frac{di_d}{dt}= -\frac{d s}{dt}-\gamma(i_u+i_d) \Rightarrow
\frac {ds}{dt}= -\gamma i_{\max} \\
&\Rightarrow s(t)= s_B-\gamma i_{\max} (t-t_B).  \label{eq:ssol}
\end{aligned}\end{equation}
Thus, $s_C=s_B-\gamma i_{\max} \tau_3$ and $i_{uC}={1}/({{1}/{i_{uB}}%
-\beta \tau_3})$. The value of $\tau_3$ is limited from above by the
constraints $\tau_3 <(\beta i_{uB})^{-1}$ (for solvability of Eq.\ \eqref{ieq}%
) and $\theta^{*} \geq 0$, which can be expressed as an upper constraint on $%
\tau_3$ via Eqs. \eqref{optimalBC}, \eqref{eq:iusol} and \eqref{eq:ssol}. We
denote by $\bar \tau_3$ the minimum of such constraints. 
The cost of the optimal
testing policy in the interval $[t_B,t_C]$ is: 
\begin{align*}
\int_{t_B}^{t_C} \theta^{*}(t)dt&= \int_{t_B}^{t_C} \left( \beta(s(t)-
i_u(t)) - \gamma -\kappa \right) dt \\
&=\int_0^{\tau_3}\left( \beta\left( s_B-\gamma i_{\max} t^{\prime }- \frac{1%
}{\frac{1}{i_{uB}}-\beta t^{\prime }} \right) - \gamma - \kappa \right) d
t^{\prime } \\
&= \tau_3 \left(\beta s_B -\gamma -\kappa-\frac{\gamma}{2}\beta
i_{\max}\tau_3 \right) +\ln \left(1-\beta i_{uB}\tau_3 \right).
\end{align*}

\subsubsection{Intervals $[t_C,t_D]$ and $[t_D,t_E]$}

The analysis is identical to that of Section \ref{sec:intervals0AB}, the
objective is to evaluate $s_D$, $i_{uD}$ ,$s_E$ and $i_{uE}$ as a function
of $s_C$ and $i_C$. From Eq. \eqref{f} we have: 
\begin{align*}
f_{\theta_{\max}}(s_C,i_{uC},s_D,i_{uD})=0 \\
f_0(s_D,i_{uD},s_E,i_{uE})=0.
\end{align*}
At time $t_E$, $i_u+i_d$ is tangent to $i_{\max}$, hence: (similar to Eq.\
\eqref{A3}) 
\begin{equation*}
\beta s_E i_{uE}=\gamma i_{\max}.
\end{equation*}
Integrating $d i_d/dt = \theta_{\max} i_u + \kappa i_u -\gamma i_d$ in the
interval $[t_C,t_D]$ yields: 
\begin{equation*}
i_d(D)=\left(i_d(C)+ ( \theta_{\max}+\kappa) \int_{s_D}^{s_C} \frac{
e^{\gamma \Delta t_{\theta_{\max}}(s,s_C,i_C)} }{\beta s} ds\right)e^{-%
\gamma \tau_4},
\end{equation*}
where $i_d(C)=i_{\max}-i_{uC}$. We can also compute $i_d(D)$ from the
equation for $i_d$ in the interval $[t_D,t_E]$ (during which the testing
rate is equal to zero) as: 
\begin{equation*}
i_d(D)=(i_{\max}-i_{uE})e^{\gamma \tau_5}-\kappa \int_{s_E}^{s_D} \frac{%
e^{\gamma \Delta t_{0}(s,s_D,i_D)}}{\beta s}ds,
\end{equation*}
and equating the two expressions for $i_d(D)$ we find: 
\begin{equation*}
(i_{\max}-i_{uE}) e^{\gamma \tau_5}-\kappa \int_{s_E}^{s_D} \frac{e^{\gamma
\Delta t_0(s,s_D,i_D)}}{\beta s}ds=\left(i_{\max}-i_{uC}+ (
\theta_{\max}+\kappa) \int_{s_D}^{s_C} \frac{ e^{\gamma \Delta
t_{\theta_{\max}}(s,s_C,i_C)} }{\beta s} ds\right)e^{-\gamma \tau_4},
\end{equation*}
The time intervals $\tau_4$ and $\tau_5$ can be computed from Eq.\ \eqref{g}.
The cost of the control in the interval $[t_C,+\infty]$ is thus: 
\begin{equation*}
\int_{t_C}^\infty \hat{\theta}^*(t)dt = \tau_4 \theta_{\max}.
\end{equation*}

\subsubsection{Minimization over $\protect\tau_3$ yields the optimal testing
policy}

So far we obtained a formula for the overall cost that, for fixed $s_0$ and $i_0$, depends only on $\tau_3$. Minimizing such cost for $\tau_3\in[0, \bar
\tau_3]$ yields the optimal value for $\tau_3$ (Fig. \ref{fig:costs}), with
which the optimal testing strategy $\theta^{*}$ is characterized. Note that
since $s$ is monotonically decreasing, the testing strategy can be
formulated as follows: 
\begin{equation*}
\theta^{*}(t)=%
\begin{cases}
0 & \textup{if } s(t)>s_A \textup{\ or } s(t)<s_D \\ 
\theta_{\max} & \textup{if } s(t)\in[s_B,s_A]\cup[s_D,s_C] \\ 
\beta(s(t)- i_u(t)) - \gamma-\kappa & \textup{otherwise}%
\end{cases}%
\end{equation*}

\begin{figure}[h]
\centering
\includegraphics{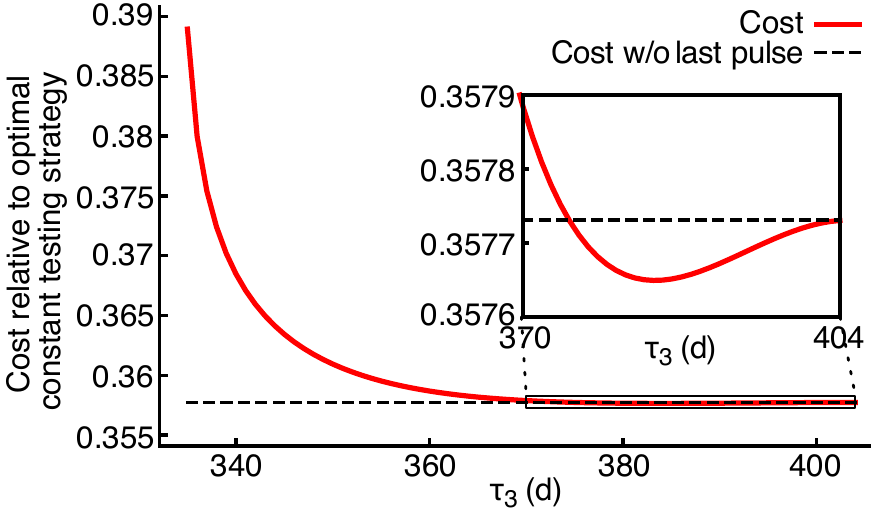}
\caption{Cost of testing policies (red curve) with the structure of Fig. \protect
\ref{fig:testing_structure} for the deterministic SIR model Eq.\ 
\eqref{eq:SIR_estimationE}, as a function of the duration $\protect\tau_3$
of the time interval $[t_B,t_C]$. Costs are computed relative to the cost of
the optimal, constant testing strategy  (Materials and Methods Section \ref{sec:constant_strategy}). The optimal testing policy adopts
the value of $\protect\tau_3$ that minimizes the cost (inset). The dashed,
black line indicates the cost of a testing policy with the largest possible
value of $\protect\tau_3$, i.e. a testing policy in which the last phase of
testing at maximum testing rate $\protect\theta_{\max}$ does not occur. The
inset shows that the benefit of the last phase of testing at maximum testing
rate is marginal.}
\label{fig:costs}
\end{figure}

\subsection{Receding horizon implementation of the optimal testing policy
for the stochastic simulations}

In the receding horizon implementation, at any time $t>t_A$ we aim at
computing the testing rate $\theta_{rh}(t)$ that brings the deterministic dynamics of Eq. \eqref{eq:SIR_estimationE} to $%
i_u+i_d=i_{\max}$ with $d{(i_u+i_d)}/dt=0$ in a time $H$. This leads to the
following set of equations: 
\begin{equation}  \label{control_sim}
\begin{aligned}
&f_{\theta_{rh}(t)}(s(t),i_u(t),s(t+H),i_u(t+H))=0 \\ 
& \left. \frac {d(i_u(t)+i_d(t))}{dt} \right|_{t+H}=\beta s(t+H)i_u(t+H)-\gamma i_{\max}=0 \\ 
&s(t+H)=s(t)-\beta\int_t^{t+H} s(t^{\prime })i_u(t^{\prime })dt^{\prime
}\simeq s(t)- \frac{\beta}2 \left( s(t+H)i_u(t+H)+s(t)i_u(t) \right)H,%
\end{aligned}%
\end{equation}
where $f_\theta$ is as in Eq. \eqref{f} and in the last equation we used the
trapezoidal rule to approximate the integral. We denote the solution of Eq.\
\eqref{control_sim} as: 
\begin{equation}  \label{theta_rec}
\theta_{rh}(s(t),i_u(t)):=-\beta \frac{s(t)+i_u(t)-s(t+H)}{\ln (s(t+H)/s(t))}+\frac{%
\gamma i_{\max}}{s(t+H)\ln(s(t+H)/s(t))}.
\end{equation}

\subsection{Validation with time-varying transmission rate}\label{sec:beta_var}
Fig. \ref{fig:fig_betavar} illustrates the performance of the testing policy 
$\hat{\theta}^*$ for the optimization problem of Eq.\ \ref{eq_full_problem} when
the transmission rate is time-varying (in this case sinusoidal to mimic
seasonality). For any time $t$, the transmission rate $\beta(t)$ is
estimated from the reconstructed state $[\hat{x}(\tau)]_{\tau=t-7}^{t-1} $
over a moving window of length 7 days by nonlinear least square regression.
The testing rate is evaluated by using Eqs. \eqref{control2} and %
\eqref{theta_rec} with the estimated $\hat \beta(t)$ instead of $\beta$. We
see from Fig. \ref{fig:fig_betavar} that the time-varying transmission rate
can be accurately predicted (while the epidemic is active) and the testing
rate implemented using such prediction effectively stabilizes the epidemics
at the desired threshold. This is a preliminary indication that the
suggested testing strategy may be effective even in the presence of an
unknown and time-varying transmission rate. A detailed numerical study is
needed to further support this conclusion and is left as future work.

\begin{figure}[!h]
\centering
\includegraphics[width=0.8\textwidth]{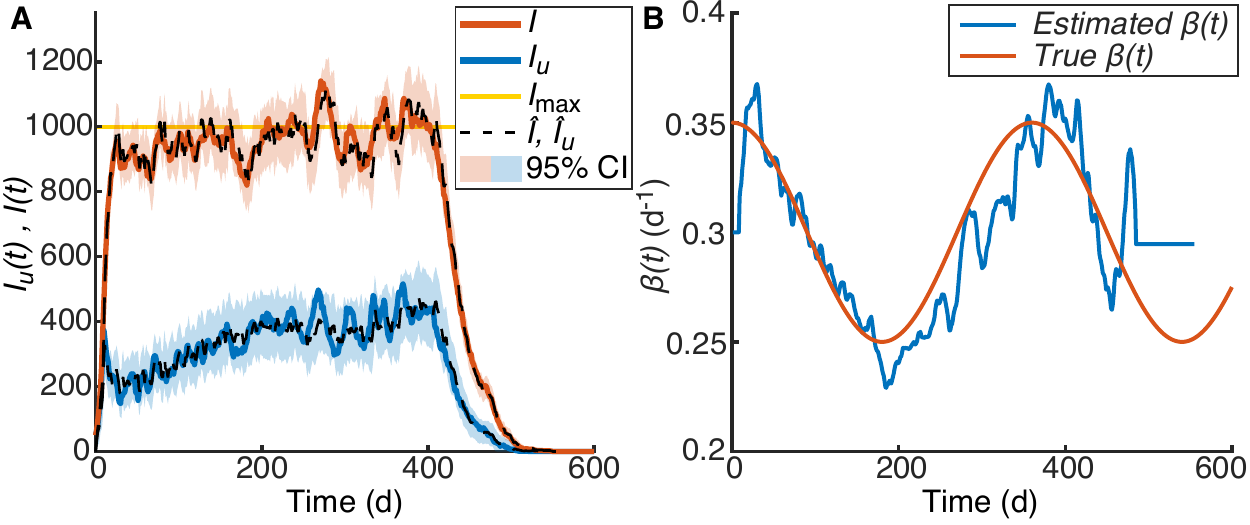}
\caption{Performance of the testing policy $\protect\hat{\theta}^*$ when $%
\protect\beta(t)$ varies sinusoidally with a one year period centered at $%
\protect\beta=0.3$ d$^{-1}$ and amplitude $0.1$ d$^{-1}$. At any time $t_k$, 
$\protect\beta(t_k)$ was estimated using data from the previous $7$ days by
fitting the epidemiological dynamics to the deterministic SIR model, taking into
account the testing rates adopted. Panel A shows the total number of
infected $I$ (orange curve) and infected-undetected (blue curve) in a stochastic
realization of the epidemic, along with the state estimates obtained from
the extended Kalman filter (dashed, black curves). Panel B shows the true
(orange curve) and estimated (blue curve) values of $\protect\beta$.}
\label{fig:fig_betavar}
\end{figure}

\end{document}